\def\@email#1#2{
 \endgroup
 \patchcmd{\titleblock@produce}
  {\frontmatter@RRAPformat}
  {\frontmatter@RRAPformat{\produce@RRAP{*#1\href{mailto:#2}{#2}}}\frontmatter@RRAPformat}
  {}{}
}
\newcommand{\young}[2]{{\mathbb{Y}^{#1}_{#2}}}
\newcommand{\1}{\mathds{1}}
\newcommand{\SU}{\mathrm{SU}}
\newcommand{\CC}{\mathbb{C}}
\newcommand{\RR}{\mathbb{R}}
\newcommand{\eig}{\mathrm{eig}}
\newcommand{\mcA}{\mathcal{A}}
\newcommand{\mcB}{\mathcal{B}}
\newcommand{\mcU}{\mathcal{U}}
\newcommand{\mcI}{\mathcal{I}}
\newcommand{\mcO}{\mathcal{O}}
\newcommand{\mcL}{\mathcal{L}}
\newcommand{\mcH}{\mathcal{H}}
\newcommand{\mcT}{\mathcal{T}}
\newcommand{\mfT}{\mathsf{T}}
\newcommand{\dbraket}[2]{\langle\!\langle {#1} \vert {#2} \rangle\!\rangle}
\newcommand{\dketbra}[2]{\vert {#1} \rangle\!\rangle\!\langle\!\langle {#2} \vert}
\def\dbraket#1{%
    \@ifnextchar\bgroup{%
        \dbraket@{#1}%
    }{%
        \langle\!\langle {#1} \vert {#1} \rangle\!\rangle%
    }%
}
\def\dbraket@#1#2{%
    \langle\!\langle {#1} \vert {#2} \rangle\!\rangle%
}
\def\dketbra#1{%
    \@ifnextchar\bgroup{%
        \dketbra@{#1}%
    }{%
        \vert {#1} \rangle\!\rangle\!\langle\!\langle {#1} \vert%
    }%
}
\def\dketbra@#1#2{%
    \vert {#1} \rangle\!\rangle\!\langle\!\langle {#2} \vert%
}
\newtheorem{definition}{Definition}[section]
\newtheorem{theorem}[definition]{Theorem}
\newtheorem{corollary}[definition]{Corollary}
\newtheorem{lemma}[definition]{Lemma}
\newtheorem{conjecture}[definition]{Conjecture}
\begin{document}

\preprint{AIP/123-QED}

\title[Asymptotically optimal unitary estimation in $\SU(3)$ by the analysis of graph Laplacian]{Asymptotically optimal unitary estimation in $\SU(3)$ by the analysis of graph Laplacian}
\author{Satoshi Yoshida}
\email{satoshiyoshida.phys@gmail.com}
\affiliation{Department of Physics, Graduate School of Science, The University of Tokyo, Hongo 7-3-1, Bunkyo-ku, Tokyo 113-0033, Japan}
\author{Hironobu Yoshida}
\affiliation{Nonequilibrium Quantum Statistical Mechanics RIKEN Hakubi Research Team, Pioneering Research Institute (PRI), RIKEN, 2-1 Hirosawa, Wako, Saitama 351-0198, Japan}
\author{Mio Murao}
\affiliation{Department of Physics, Graduate School of Science, The University of Tokyo, Hongo 7-3-1, Bunkyo-ku, Tokyo 113-0033, Japan}
\affiliation{Trans-scale Quantum Science Institute, The University of Tokyo, Bunkyo-ku, Tokyo 113-0033, Japan}

\date{\today}

\begin{abstract}
Unitary estimation is the task to estimate an unknown unitary operator $U\in\SU(d)$ with $n$ queries to the corresponding unitary operation, and its accuracy is evaluated by an estimation fidelity.
We show that the optimal asymptotic fidelity of $3$-dimensional unitary estimation is given by $F_\mathrm{est}(n,d=3) = 1-{56\pi^2 \over 9n^2} + O(n^{-3})$ by the analysis of the graph Laplacian based on the finite element method.
We also show the lower bound on the fidelity of $d$-dimensional unitary estimation for an arbitrary $d$ given by $F_\mathrm{est}(n,d) \geq 1- {(d+1)(d-1)(3d-2)(3d-1) \over 6n^2} + O(n^{-3})$ achieving the best known lower bound and tight scaling with respect to $n$ and $d$.
This lower bound is derived based on the unitary estimation protocol shown in \href{https://doi.org/10.1103/PhysRevA.75.022326}{[J.~Kahn, Phys.~Rev.~A 75, 022326, 2007]}.
\end{abstract}

\maketitle

\section{Introduction}

Recent development on quantum technology has increased the demand to learn the property of quantum dynamics from the experimental data.
The dynamics of a $d$-level closed quantum system is represented by a $d$-dimensional unitary operator $U\in \SU(d)$~\cite{nielsen2010quantum}, and learning of the dynamics is modeled as the task to estimate $U$ using multiple queries to $U$.
This task is called \emph{unitary estimation}~\cite{holevo2011probabilistic, acin2001optimal, dariano2001using, fujiwara2001estimation, peres2002covariant, bagan2004entanglement, bagan2004quantum, ballester2004estimation, chiribella2004efficient, chiribella2005optimal, hayashi2006parallel, kahn2007fast, bisio2010optimal, yang2020optimal, haah2023query}, and in particular, we consider the Bayesian setting where the input unitary operator is drawn from the Haar measure of the unitary group.
One of the most important fundamental questions is the tradeoff between the \emph{query complexity}, the number of queries, and the \emph{estimation fidelity}, which expresses the accuracy of estimation.
The optimal estimation fidelity with $n$ queries, denoted by $F_\mathrm{est}(n,d)$, is shown to obey the Heisenberg limit~\cite{kahn2007fast}:
\begin{align}
    F_\mathrm{est}(n,d) = 1-{h(d)\over n^2} + O(n^{-3}),
\end{align}
for a constant $h(d)$ depending on $d$.
Since the optimal fidelity of unitary estimation is equivalent to the optimal performances of various other tasks such as parallel unitary inversion and transposition~\cite{quintino2022deterministic}, storage and retrieval of unitary operations~\cite{bisio2010optimal}, and deterministic port-based teleportation~\cite{yoshida2024one}, it is important to determine $h(d)$.
Reference~\cite{kahn2007fast} shows an upper bound on $h(d)$ in terms of the integral with numerical estimates for $d=2,3,4$, but its scaling with respect to $d$ was not known.
Reference~\cite{christandl2021asymptotic} shows $\Omega(d^2)\leq h(d)\leq O(d^5)$ for deterministic port-based teleportation, which is shown to have the same optimal fidelity as the unitary estimation in Ref.~\cite{yoshida2024one}.
Reference~\cite{yang2020optimal} shows that $h(d)\leq O(d^4)$, which is shown to be asymptotically tight [i.e., $h(d)\geq \Omega(d^4)$] in Refs.~\cite{haah2023query, yoshida2024one}.
However, the exact formula for $h(d)$ is not determined except for $d=2$, where $F_\mathrm{est}(n,d=2)$ is fully determined in Ref.~\cite{bagan2004entanglement}.

In this work, we obtain an asymptotically optimal unitary estimation of $\SU(3)$ to determine $h(d=3)$.
The derivation is based on the Dirichlet Laplacian problem.
Reference~\cite{christandl2021asymptotic} shows that $h(d)$ is upper bounded by ${\lambda_\mathrm{min}(\Omega_{d-1}) \over d}$ for deterministic port-based teleportation, where $\lambda_\mathrm{min}(\Omega_{d-1})$ is the first eigenvalue of the Dirichlet Laplacian problem on a $(d-1)$-dimensional simplex $\Omega_{d-1}$.
We show that this upper bound is tight by using the finite element method for $d=2,3$, and obtain $h(d=3)$.
This proof is based on the analysis of the graph Laplacian, which reduces the discrete Laplacian problem to the continuous one using the finite element method.
We also determine the asymptotic fidelity of the unitary estimation protocol shown in Ref.~\cite{kahn2007fast}, which is shown to be nearly tight for small $d$ and asymptotically tight for large $d$.

The rest of this paper is organized as follows.
In Sec.~\ref{sec:preliminaries}, we summarize the preliminary facts and notations required to present the main results.
In Sec.~\ref{sec:problem_setting}, we formalize the task of unitary estimation.
In Sec.~\ref{sec:main_result}, we present the main results of this work, the optimal estimation fidelity of $\SU(3)$ and nearlly optimal unitary estimation for an arbitrary dimesion $d$.
In Sec.~\ref{sec:conclusion}, we conclude this paper and discuss the future directions.
In Appendix~\ref{appendix_sec:review_FEM}, we review the finite element method.
In Appendix~\ref{appendix_sec:kahn}, we review the unitary estimation protocol shown in Ref.~\cite{kahn2007fast} and derive an analytical lower bound of $h(d)$.

\section{Preliminaries}
\label{sec:preliminaries}
\subsection{Notation and definitions}

We denote $[n]\coloneqq \{1,2,\ldots,n\}$.
For a logical expression $\varphi(x,\ldots,z)$ on variables $x,\ldots,z$, we define $\delta_{\varphi(x,\ldots,z)}$ by
\begin{align}
    \delta_{\varphi(x,\ldots,z)} \coloneqq
    \begin{cases}
        1 & (\varphi(x,\ldots,z) \text{ holds})\\
        0 & (\text{otherwise})
    \end{cases}.
\end{align}
For a set $X$, we denote the cardinality of $X$ by $\# X$.
We use the big-O notation $O(\cdot)$, $\Omega(\cdot)$ and $\Theta(\cdot)$, defined as follows~\cite{arora2009computational}:
\begin{align}
    f(x) = O(g(x)) &\Leftrightarrow \limsup_{x\to \infty}{\abs{f(x) \over g(x)}} <\infty,\\
    f(x) = \Omega(g(x)) &\Leftrightarrow g(x) = O(f(x)),\\
    f(x) = \Theta(g(x)) &\Leftrightarrow f(x)=O(g(x)) \text{ and } f(x) = \Omega(g(x)).
\end{align}
We denote the set of linear operators on a Hilbert space $\mcH$ by $\mcL(\mcH)$, and the identity operator on $\mcH$ by $\1_{\mcH}$.
For a matrix $X$, we denote the tranpose of $X$ by $X^\mfT$, and the complex conjguate of $X$ by $X^\dagger$.
For a $d$-dimensional Eucledian space $\RR^d$ with natural coordinates $(x_1,\ldots, x_d)$, we define the gradient operator $\nabla\coloneqq ({\partial \over \partial x_1},\ldots,{\partial \over \partial x_d})$ and the Laplacian $\Delta\coloneqq \nabla\cdot \nabla = \sum_{i=1}^{d} \left({\partial \over \partial x_i}\right)^2$.

The quantum state is represented by $\phi\in\mcL(\mcH)$ such that $\phi\geq 0$ and $\Tr \phi = 1$.
The measurement on a quantum state on $\mcH$ is represented by a positive operator-valued measure (POVM) $\{M_a\}_a \subset \mcL(\mcH)$ satisfying $M_a\geq 0$ and $\sum_a M_a = \1_{\mcH}$.
The probability distribution to obtain a measurement outcome $a$ is given by $\Tr(M_a \phi)$.
The transformation of a quantum state on $\mcH_1$ to a quantum state on $\mcH_2$ is called a quantum channel.
A quantum channel is represented by a completely positive and trace preserving (CPTP) map $\Lambda: \mcL(\mcH_1) \to \mcL(\mcH_2)$, i.e., $(\Lambda\otimes \1_{\mcL(\mcA)})(\phi)$ is positive for any Hilbert space $\mcA$ and positive operator $\phi\in\mcL(\mcH_1\otimes \mcA)$, and $\Tr[\Lambda(\phi)] = \Tr(\phi)$ for all $\phi\in\mcL(\mcH_1)$.
A unitary channel is a quantum channel given by
\begin{align}
    \mcU(\cdot) = U\cdot U^\dagger
\end{align}
using a unitary operator $U$.

\subsection{Representation theory of the unitary group}

We consider a representation $\rho:\SU(d) \to \mcL(\CC^d)^{\otimes n}$ of the (special) unitary group $\SU(d)$ given by
\begin{align}
    \rho(U) = U^{\otimes n} \quad \forall U\in\SU(d).
\end{align}
This representation can be decomposed into the irreducible representations as (Schur-Weyl duality~\cite{harrow2005applications})
\begin{align}
    \rho(U) \simeq \bigoplus_{\mu\in\young{d}{n}} \rho_{\mu}(U) \otimes \1_{\CC^{m_\mu}},
\end{align}
where $\young{d}{n}$ is the set of Young diagrams with $n$ boxes and at most $d$ columns, $\rho_\mu:\SU(d) \to \mcH_{\mu}$ is the irreducible representation of $\SU(d)$ corresponding to $\mu$, and $m_\mu$ is the multiplicity.
The Young diagram $\mu$ is represented by a partition of $n$ given by $d$ integers $(\mu_1,\ldots, \mu_d)$ such that $\mu_1\geq \cdots \geq \mu_d\geq 0$ and $\sum_{i=1}^{d}\mu_i = n$.
This irreducible decomposition induces an isomorphism of representation spaces:
\begin{align}
\label{eq:space_isomorphism}
    (\CC^d)^{\otimes n} \simeq \bigoplus_{\mu\in\young{d}{n}} \mcH_\mu \otimes \CC^{m_\mu}.
\end{align}

We can define an orthonormal basis in the representation space $\mcH_\mu$ called the Gelfand-Tsetlin basis $\{\ket{q_\mu}\}_{q_\mu}$.
Similarly, we can define an orthonormal basis in the multiplicity space $\CC^{m_\mu}$ called the Young-Yamanouchi basis $\{\ket{p_\mu}\}_{p_\mu}$.
By combining these two bases, we can define the Schur basis $\{\ket{q_\mu}\otimes \ket{p_\mu}\}_{\mu, q_\mu, p_\mu}$ on the left-hand side of Eq.~\eqref{eq:space_isomorphism}.
We can define the computational basis on $(\CC^{d})^{\otimes n}$ by $\{\ket{i_1}\otimes \cdots \otimes \ket{i_n}\}_{(i_1,\ldots,i_n)\in [d]^n}$ by using a natural basis $\{\ket{i}\}_{i\in[d]}$ of $\CC^d$.
The basis change from the computational basis to the Schur basis is called the quantum Schur transform, denoted by $U_\mathrm{Sch}$.

\begin{figure*}
    \centering
    \includegraphics[width=.7\linewidth]{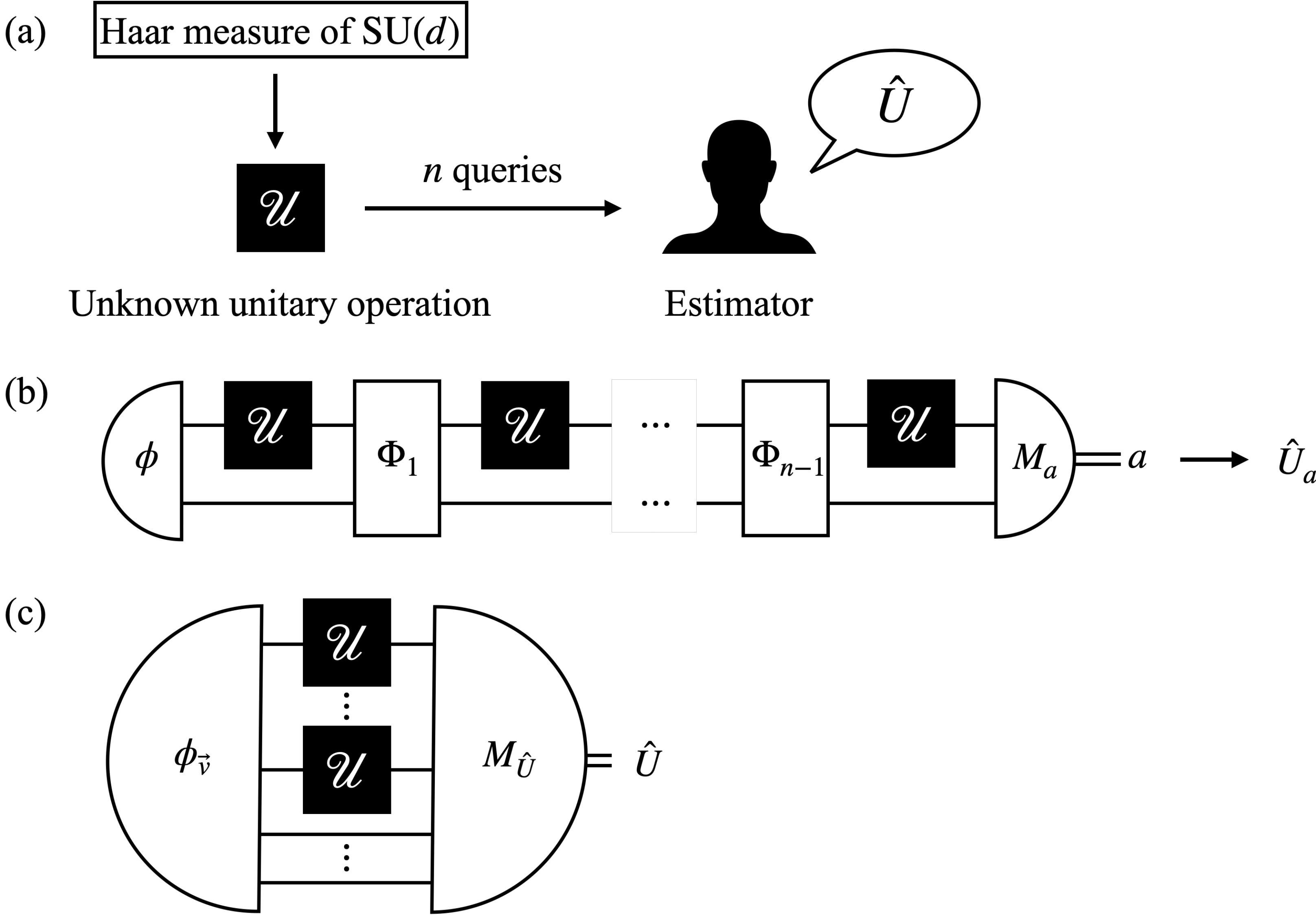}
    \caption{(a) Overview of the task of unitary estimation.
    Unknown unitary operation $\mcU$ corresponding to a unitary operator $U$ is drawn from the Haar measure of $\SU(d)$.
    The task is to estimate $U$ with $n$ queries to $\mcU$.
    (b) Most general protocol for the unitary estimation [see also Eq.~\eqref{eq:def_superinstrument}].
    Based on the measurement outcome $a$, one outputs the estimator $\hat{U}_a$.
    The optimal fidelity of unitary estimation is shown to be the covariant parallel protocol shown in (c), which outputs the estimator $\hat{U}$ as the measurement outcome.}
    \label{fig:unitary_estimation}
\end{figure*}

\section{Problem setting: Unitary estimation}
\label{sec:problem_setting}
\subsection{Definition of the task}

Suppose we are given an unknown unitary channel $\mcU$ corresponding to a unitary operator $U$ drawn from the Haar measure~\cite{mele2024introduction} of $\SU(d)$.
The task of unitary estimation is to estimate $U$ by querying $\mcU$ for $n$ times.
The most general strategy to obtain an estimator $\hat{U}$ given $n$ queries to $\mcU$ is represented by a quantum tester~\cite{chiribella2009theoretical}.
Quantum tester realizes a transformation of $n$ quantum channels $\Lambda_i: \mcL(\mcI_i) \to \mcL(\mcO_i)$ for $i\in [n]$ to a probability distribution $\{p_a\}_{a\in A}$, {where $\mcL(\mcI_i)$ and $\mcL(\mcO_i)$ are the sets of linear operators on finite-dimensional Hilbert spaces $\mcI_i$ and $\mcO_i$, respectively, and $A$ is an index set.}
It is implemented by a quantum circuit using auxiliary Hilbert spaces $\mcA_i$ for $i\in [n]$, a quantum state $\phi\in\mcL(\mcI_1\otimes \mcA_1)$, quantum channels $\Phi_i: \mcL(\mcO_i \otimes \mcA_i) \to \mcL(\mcI_{i+1}\otimes \mcA_{i+1})$ for $i\in [n-1]$ and a POVM measurement $\{M_a\}_{a\in A} \subset \mcL(\mcO_n\otimes \mcA_n)$ such that
\begin{align}
    p_a
    &= \mcT_a(\Lambda_1,\ldots,\Lambda_n)\\
    &\coloneqq \mathrm{Tr}[M_a (\Lambda_n\otimes \1_{\mcL(\mcA_n)}) \circ \Phi_{n-1}\circ \nonumber\\
    &\hspace{30pt}\cdots \circ (\Lambda_2\otimes \1_{\mcL(\mcA_2)}) \circ \Phi_1\circ (\Lambda_1\otimes \1_{\mcL(\mcA_1)})(\phi)],
    \label{eq:def_superinstrument}
\end{align}
where $\{\mcT_a\}_{a\in A}$ represents a quantum tester.
The probability distribution to obtain an estimator $\hat{U}_a$ using $n$ queries to $\mcU$ is given by
\begin{align}
    p(\hat{U}_a \mid U) = \mcT_a(\mcU^{\times n}).
\end{align}
See Fig.~\ref{fig:unitary_estimation} for the overview of unitary estimation.

The standard figure of merit for unitary estimation is the \emph{average-case fidelity}, defined by
\begin{align}
    F_\mathrm{est}(\{\mcT_a\}_{a\in A}) \coloneqq \int_{\SU(d)} \dd U \sum_{a\in A} p(\hat{U}_a \mid U) F_\mathrm{ch}(U, \hat{U}_a),
\end{align}
where $\dd U$ is the Haar measure on $\SU(d)$ and $F_\mathrm{ch}(U, \hat{U}_a)$ is a channel fidelity~\cite{raginsky2001fidelity} between two unitary channels corresponding to $U$ and $\hat{U}_a$, defined by
\begin{align}
    F_\mathrm{ch}(U, \hat{U}_a)\coloneqq {1 \over d^2} \abs{\Tr(U^\dagger \hat{U}_a)}^2.
\end{align}
We consider the maximium value of the average-case fidelity for a given number of $n$ and $d$ denoted by $F_\mathrm{est}(n,d)$, i.e.,
\begin{align}
    F_\mathrm{est}(n,d)\coloneqq \max_{\{\mcT_a\}_{a\in A}:\text{ quantum tester}} F_\mathrm{est}(\{\mcT_a\}_{a\in A}).
\end{align}

\subsection{Parallel covariant protocol}

References~\cite{chiribella2005optimal, bisio2010optimal} have shown that the maximum average-case fidelity of unitary estimation is attained by a parallel covariant protocol $\{\mcT_{\hat{U}}\}_{\hat{U}\in\SU(d)}$ whose index set is given by $\SU(d)$.
The protocol is given by
\begin{align}
    \mcT_{\hat{U}}(\mcU^{\times n}) \coloneqq \Tr[M_{\hat{U}} (\mcU^{\otimes n}\otimes \1_{\mcL(\mcA)})(\ketbra{\phi_{\vec{v}}})] \dd \hat{U},
\end{align}
where $\ket{\phi_{\vec{v}}}$ is a quantum state defined by
\begin{align}
    \ket{\phi_{\vec{v}}} = U_\mathrm{Sch}^{\dagger} \bigoplus_{\mu\in\young{d}{n}} {v_\mu \over \sqrt{\dim \mcH_\mu}} \ket{W_\mu},
\end{align}
$\{M_{\hat{U}} \dd \hat{U}\}_{\hat{U}\in\SU(d)}$ is a POVM measurement with continuous measurement outcome defined by
\begin{align}
    M_{\hat{U}} &\coloneqq \ketbra{\eta_{\hat{U}}},\\
    \ket{\eta_{\hat{U}}} &\coloneqq U_\mathrm{Sch}^{\dagger} \bigoplus_{\mu\in\young{d}{n}} [\rho_{\mu}(\hat{U}) \otimes \1_{\mcH_\mu \CC^{m_\mu} \CC^{m_\mu}}] \ket{W_\mu},
\end{align}
$\vec{v}\coloneqq (v_\mu)_{\mu\in\young{d}{n}}$ is a $\abs{\young{d}{n}}$-dimensional complex vector satisfying
\begin{align}
    \abs{\vec{v}}^2\coloneqq \sum_{\mu\in\young{d}{n}}\abs{v_\mu}^2=1,
\end{align}
and $\ket{W_\mu}\in \mcH_\mu \otimes \mcH_\mu \otimes \CC^{m_\mu} \otimes \CC^{m_\mu}$ is an unnormalized vector defined by
\begin{align}
    \ket{W_\mu} \coloneqq \sum_{q_\mu} \ket{q_\mu}_{\mcH_\mu} \otimes \ket{q_\mu}_{\mcH_\mu} \otimes \ket{\mathrm{arb}}_{\CC^{m_\mu}\CC^{m_\mu}}
\end{align}
using an arbitrary normalized vector $\ket{\mathrm{arb}}\in \CC^{m_\mu}\otimes \CC^{m_\mu}$.

The average-case channel fidelity of the parallel covariant protocol is given by~\cite{chiribella2005optimal, yang2020optimal, yoshida2024one}
\begin{align}
    \vec{v}^{\dagger} M_\mathrm{est} \vec{v},
\end{align}
where $M_\mathrm{est}$ is a $\abs{\young{d}{n}}\times \abs{\young{d}{n}}$ matrix defined by
\begin{align}
    (M_\mathrm{est})_{\mu\nu} \coloneqq {\#[(\mu+_d\square)\cap (\nu+_d\square)] \over d^2}
\end{align}
for all $\mu, \nu\in\young{d}{n}$, $\mu+_d\square$ is defined by
\begin{align}
    \mu+_d\square \coloneqq \{\mu+e_i \mid i\in[d]\}\cap \young{d}{n+1}
\end{align}
and $e_i$ is a tuple of $d$ numbers defined by $(e_i)_j\coloneqq \delta_{i=j}$.
Therefore, the maximum fidelity of unitary estimation is given by
\begin{align}
    F_\mathrm{est}(n,d)
    &= \max_{\abs{\vec{v}}^2=1} \vec{v}^\dagger M_\mathrm{est} \vec{v}\\
    &= \max \eig M_{\mathrm{est}},\label{eq:estimation_fidelity_maximal_eigenvalue}
\end{align}
where $\max \eig M_{\mathrm{est}}$ represents the maximum eigenvalue of $M_\mathrm{est}$.
Note that $M_\mathrm{est}$ can also be represented as
\begin{align}
    (M_\mathrm{est})_{\mu\nu}
    =
    \begin{cases}
        {\#(\mu+_d\square) \over d^2} & (\mu=\nu)\\
        {1\over d^2} & (\exists i\neq j \text{ s.t. } \mu=\nu+f_{ij})\\
        0 & (\text{otherwise})
    \end{cases},
\end{align}
where $f_{ij}$ is defined by $f_{ij}\coloneqq e_i-e_j$.

\subsection{One-to-one correspondence between deterministic port-based teleportation and unitary estimation}

\begin{figure}
    \centering
    \includegraphics[width=\linewidth]{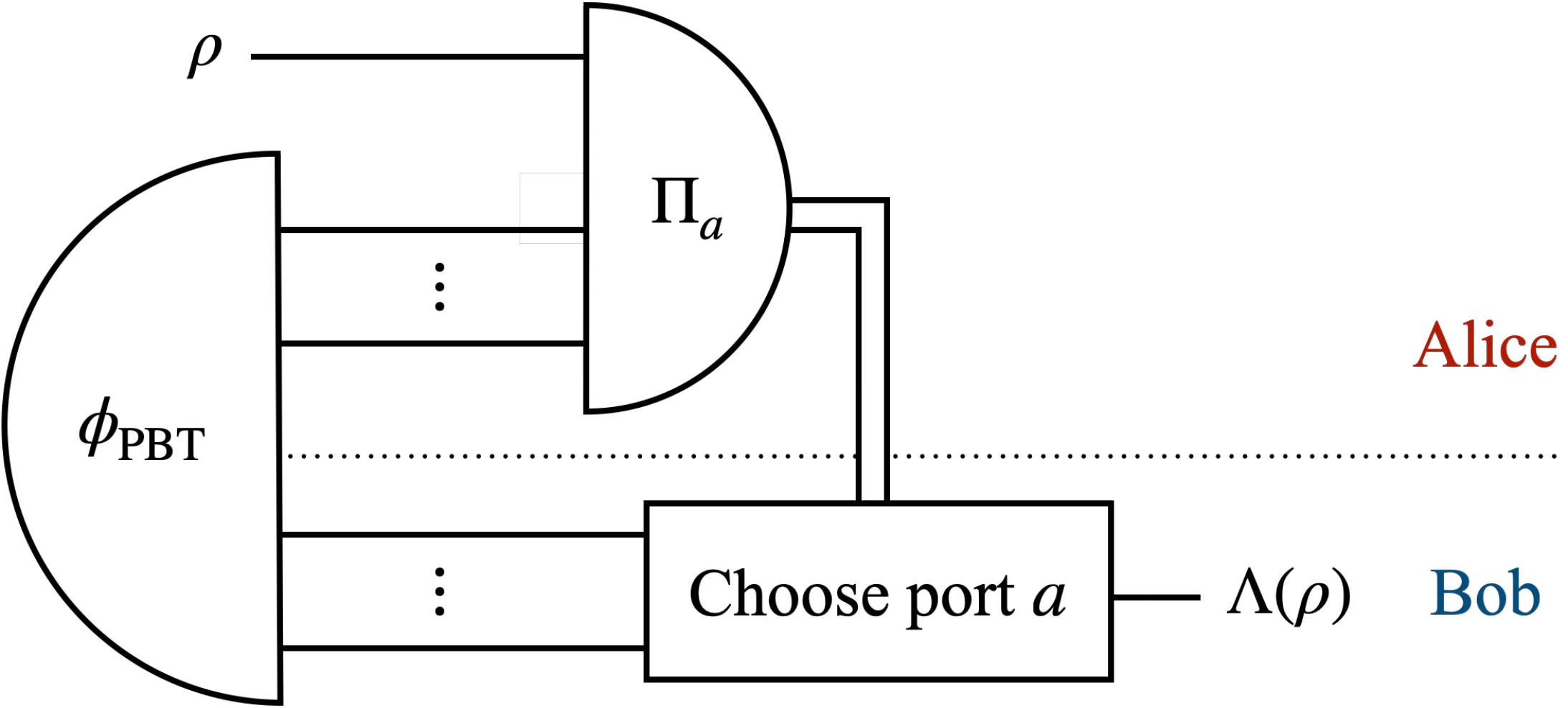}
    \caption{The quantum circuit expressing the deterministic port-based teleportation (dPBT).
    ALice and Bob shares an entangled state $\phi_\mathrm{PBT}$, and Alice applies a joint measurement $\{\Pi_a\}_a$ on her input state $\rho$ and her share of $\phi_\mathrm{PBT}$.
    Alice sends the measurement outcome $a$ to Bob, and Bob chooses port $a$ on his share of $\phi_\mathrm{PBT}$ to obtain a quantum state $\Lambda(\rho)$ [see also Eq.~\eqref{eq:PBT}].}
    \label{fig:PBT}
\end{figure}

The optimal fidelity of unitary estimation has a close connection with another task called deterministic port-based teleportation (dPBT)~\cite{yoshida2024one}.
The task of dPBT is summarized as follows (see also Fig.~\ref{fig:PBT}).
The sender (Alice) wants to teleport an arbitrary qudit state $\rho\in \mcL(\mcA_0)$ to the receiver (Bob) using a shared $2N$-qudit entangled state $\phi_\mathrm{PBT}\in \mcL(\mcA^N \otimes \mcB^N)$, where $\mcA^N$ and $\mcB^N$ are the joint Hilbert spaces defined by $\mcA^N\coloneqq \bigotimes_{i=1}^{N} \mcA_i$ and $\mcB^N\coloneqq \bigotimes_{i=1}^{N} \mcB_i$ for $\mcA_0 \simeq \mcA_i\simeq \mcB_i\simeq (\CC^d)^{\otimes N}$.
The Hilbert spaces $\mcA_i$ and $\mcB_i$ for $i\in [n]$ are called the ports.
Alice measures a quantum state in $\mcA_0\otimes \mcA^N$ with a POVM measurement $\{\Pi_a\}_{a=1}^{N}$, and sends the measurement outcome $a$ to Bob.
Bob chooses the port $a$ based on the measurement outcome $a$ to obtain the quantum state $\Lambda(\rho)$, where $\Lambda: \mcL(\mcA_0) \to \mcL(\mcB_0)$ is the teleportation channel defined by
\begin{align}
\label{eq:PBT}
    \Lambda(\rho)\coloneqq \sum_{a=1}^{N} \Tr_{\mcA_0\mcA^N\overline{\mcB_a}}[(\Pi_a\otimes \1_{\mcB^N})(\rho\otimes \phi_\mathrm{PBT})],
\end{align}
where $\overline{\mcB_a}\coloneqq \bigotimes_{i\neq a} \mcB_i$.

The performance of dPBT is evaluated by the teleportation fidelity~\cite{horodecki1999general}, which is the channel fidelity~\cite{raginsky2001fidelity} between $\Lambda$ and the identity channel $\1_{\mcL(\CC^d)}$ given by
\begin{align}
    F_\mathrm{PBT}(\phi_\mathrm{PBT}, \{\Pi_a\}_{a=1}^{N})\coloneqq {1\over d^2} \sum_i \abs{\Tr(K_i)}^2,
\end{align}
where $\{K_i\}_i$ is the set of Kraus operators~\cite{nielsen2010quantum} of $\Lambda$ satisfying $\Lambda(\rho) = \sum_i K_i \rho K_i^\dagger$.
We consider the maximum value of the teleportation fidelity for a given $N$ and $d$ denoted by $F_\mathrm{PBT}(N,d)$, i.e.,
\begin{align}
    F_\mathrm{PBT}(N,d) \coloneqq \max_{\substack{\phi_\mathrm{PBT}: \text{ quantum state}\\\{\Pi_a\}_{a=1}^{N}:\text{ POVM}}} F_\mathrm{PBT}(\phi_\mathrm{PBT}, \{\Pi_a\}_{a=1}^{N}).
\end{align}
Reference~\cite{yoshida2024one} shows that optimal teleportation fidelity of dPBT with $N=n+1$ is equivalent to that of the estimation fidelity of unitary estimation with $n$ queries:
\begin{align}
    F_\mathrm{PBT}(N=n+1,d) = F_\mathrm{est}(n,d),
\end{align}
with an explicit construction of the optimal protocol of dPBT (unitary estimation) from a given optimal protocol of unitary estimation (dPBT).

\section{Optimal performance of unitary estimation}
\label{sec:main_result}
\subsection{Asymptotically optimal unitary estimation in $\SU(3)$}

We first show the following Theorem on the asymptotic optimal fidelity of unitary estimation.

\begin{theorem}
\label{thm:asymptotically_optimal_unitary_estimation}
    For $d=2,3$, the asymptotic optimal fidelity of $d$-dimensional unitary estimation is given by
    \begin{align}
        F_\mathrm{est}(n,d) = 1-{\lambda_\mathrm{min}(\Omega_{d-1}) \over d n^2}+O(n^{-3}),
    \end{align}
    where $\lambda_\mathrm{min}(\Omega_{d-1})$ is the minimum eigenvalue of the following Dirichlet Laplacian problem:
    \begin{align}
    \label{eq:dirichlet_laplacian}
    \begin{split}
        \Delta u &= -\lambda u,\\
        u\vert_{\partial \Omega_{d-1}} &= 0,\\
        u&\not\equiv 0,
    \end{split}
    \end{align}
    $\Delta$ is the $(d-1)$-dimensional Laplacian operator and $\Omega_{d-1}$ is a $(d-1)$-polytope defined by
    \begin{align}
    \label{eq:def_Omega}
        \Omega_{d-1}\coloneqq \Bigg\{\vec{x}\in \RR^d \;\Bigg|\; x_1\geq \cdots \geq x_d\geq 0, \;\sum_{i=1}^{d} x_i = 1\Bigg\},
    \end{align}
    and $\partial \Omega_{d-1}$ is the boundary of $\Omega_{d-1}$.
\end{theorem}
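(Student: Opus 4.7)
By Eq.~\eqref{eq:estimation_fidelity_maximal_eigenvalue}, the theorem is equivalent to
\begin{align}
    \min\eig L_n = \frac{\lambda_\mathrm{min}(\Omega_{d-1})}{dn^2} + O(n^{-3}),
\end{align}
where $L_n \coloneqq I - M_\mathrm{est}$ acts on $\ell^2(\young{d}{n})$. The first step is to recognise $L_n$ as a weighted graph Laplacian on the set of Young diagrams: the off-diagonal entry $(L_n)_{\mu\nu} = -1/d^2$ connects diagrams differing by a single box move $\nu = \mu + f_{ij}$, and for any \emph{interior} diagram (strict inequalities $\mu_1 > \cdots > \mu_d > 0$) all $d(d-1)$ such moves remain in $\young{d}{n}$, so the corresponding row of $L_n$ sums to zero. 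On the boundary of $\young{d}{n}$ some moves are forbidden; the missing off-diagonal terms accumulate on the diagonal as a strictly positive mass $r(\mu) \geq 0$, which plays the role of the Dirichlet boundary condition imposed in Eq.~\eqref{eq:dirichlet_laplacian}.

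Next I would rescale $\mu \mapsto \vec{x} = \mu/n \in \Omega_{d-1}$, so the lattice spacing is $h = 1/n$. Every direction $f_{ij} = e_i - e_j$ is tangent to the hyperplane $\sum_i x_i = 1$ on which $\Omega_{d-1}$ lies, and odd-order Taylor contributions cancel by the antisymmetry $f_{ji} = -f_{ij}$. A short computation using $\sum_{i \neq j}(e_i-e_j)^{\mfT}H(e_i-e_j) = 2d\,\Delta u$ on the simplex then gives the pointwise expansion
\begin{align}
    (L_n u)(\mu) = -\frac{1}{d\,n^2}\,\Delta u(\vec{x}) + O(n^{-4})
\end{align}
for smooth $u$ at an interior $\mu$, identifying $n^2 L_n$ heuristically with the Dirichlet Laplacian $-\Delta/d$ on $\Omega_{d-1}$ and predicting the constant in the theorem.

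To upgrade this heuristic to a two-sided spectral estimate with $O(n^{-3})$ precision, I would apply the finite element method reviewed in Appendix~\ref{appendix_sec:review_FEM}. The rescaled lattice $\{\mu/n : \mu \in \young{d}{n}\}$ defines a quasi-uniform simplicial mesh of $\Omega_{d-1}$ with mesh size $h = 1/n$, and up to a diagonal mass factor $L_n$ realises the standard piecewise-linear FEM discretization of $-\Delta/(dn^2)$ with Dirichlet boundary conditions. The min--max principle together with the standard $O(h^2)$ FEM eigenvalue convergence then sandwiches $\min\eig L_n$ between $\lambda_\mathrm{min}(\Omega_{d-1})/(dn^2) \pm O(n^{-3})$; the upper bound at leading order is the content of Ref.~\cite{christandl2021asymptotic} combined with the dPBT equivalence of Ref.~\cite{yoshida2024one}, while the new content of this work is the matching lower bound $\min\eig L_n \geq \lambda_\mathrm{min}(\Omega_{d-1})/(dn^2) - O(n^{-3})$, i.e., the upper bound $F_\mathrm{est}(n,d) \leq 1 - \lambda_\mathrm{min}(\Omega_{d-1})/(dn^2) + O(n^{-3})$. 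The restriction to $d = 2, 3$ reflects the fact that the FEM analysis becomes explicit for the interval $\Omega_1$ and the $30$--$60$--$90$ triangle $\Omega_2$.

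The main obstacle I anticipate is controlling the boundary layer. The continuum ground state $u_1$ vanishes only linearly on $\partial\Omega_{d-1}$, while the diagonal mass $r(\mu)$ on boundary lattice points of $L_n$ is $O(1)$ rather than $O(h^2)$, so a naive error estimate loses a factor of $n$. I would absorb this by using as trial vector the nodal interpolant of $u_1$ multiplied by a smooth cut-off vanishing on $\partial\Omega_{d-1}$ at distance $h$, and by exploiting the explicit linear boundary decay of $u_1$ (readable off the known eigenfunction on $\Omega_1$, and accessible on $\Omega_2$ via the equilateral-triangle folding trick) to show that the boundary contribution is precisely of order $n^{-3}$, matching the target error term.
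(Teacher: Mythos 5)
Your proposal follows the same overall route as the paper --- interpret $\1-M_\mathrm{est}$ as a Dirichlet graph Laplacian on the rescaled Young lattice, prove achievability with a trial vector interpolating the continuum eigenfunction (your boundary-layer remark about the linear vanishing of $u_1$ on $\partial\Omega_{d-1}$ is exactly the point the paper delegates to Christandl et al.), and use the finite element method for the converse --- but the converse step is executed differently, and the paper's version is both more elementary and more robust. You invoke two-sided quantitative FEM eigenvalue convergence ($O(h^2)$ error bounds), which requires elliptic regularity and interpolation estimates and, as stated, does not quite apply: the lattice $\{\mu/n\}$ does not triangulate $\Omega_{d-1}$ itself but a perturbed region $\Omega_{d-1}^{(n)}$ built on an augmented vertex set, and $\1-M_\mathrm{est}$ is not equal to the FEM pair ``up to a diagonal mass factor'' --- the consistent mass matrix is non-diagonal ($M(\Xi_{n,d})\propto \1-L_{n,d}/k$), and the identification with the graph Laplacian goes through a one-sided matrix inequality $L_{n,d}\leq(\1-M_\mathrm{est})d^2$ rather than an equality. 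The paper sidesteps all of this: for the needed direction (lower bound on $\min\eig(\1-M_\mathrm{est})$) only the trivial variational inequality $\lambda_\mathrm{min}(\Omega_{d-1}^{(n)})\leq\min\eig M^{-1}(\Xi)K(\Xi)$ is used, combined with domain monotonicity $\Omega_{d-1}^{(n)}\subset r_n\Omega_{d-1}$ with $r_n=1+O(n^{-1})$, so no quantitative FEM error theory is needed at all. Finally, the restriction to $d=2,3$ is not that the FEM ``becomes explicit'' on the interval and the hemi-equilateral triangle; it is that the edge set $\{f_{ij}/n\}$ yields a mesh of congruent \emph{regular} simplices (so that $K\propto L_{n,d}$ and $M\propto\1-L_{n,d}/k$ exactly) only when $\RR^{d-1}$ admits a space filling by regular simplices, which fails for $d\geq4$. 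Your plan would likely be repairable along these lines, but as written the converse step leans on machinery that does not directly fit the discrete operator at hand.
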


This theorem reproduces the asymptotic rate of the optimal unitary estimation for $d=2$ given by~\cite{bagan2004entanglement}
\begin{align}
    F_\mathrm{est}(n,d=2) = 1-{\pi^2\over n^2}+O(n^{-3}),
\end{align}
since $\Omega_1$ is a line segment of length ${1\over \sqrt{2}}$ and the first Dirichlet eigenvalue is given by $\lambda_\mathrm{min}(\Omega_1) = 2 \pi^2$.
For $d=3$, we show the following analytical formula for the asymptotic fidelity of unitary estimation in $\SU(3)$:
\begin{corollary}
\label{cor:h(d)_d=3}
    \begin{align}
        F_\mathrm{est}(n, d=3) = 1-{56 \pi^2 \over 9 n^2}+O(n^{-3}).
    \end{align}
\end{corollary}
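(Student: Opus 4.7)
The plan is to specialize Theorem~\ref{thm:asymptotically_optimal_unitary_estimation} to $d=3$ and compute $\lambda_\mathrm{min}(\Omega_2)$ in closed form by recognizing $\Omega_2$ as a $30$-$60$-$90$ right triangle (half of an equilateral triangle) and invoking Lamé's classical closed-form Dirichlet spectrum of the equilateral triangle.

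First I would read off the geometry of $\Omega_2$. The extreme points of $\{x_1\geq x_2\geq x_3\geq 0,\; x_1+x_2+x_3 = 1\}$ are $(1,0,0)$, $(1/2,1/2,0)$, and $(1/3,1/3,1/3)$, so $\Omega_2$ is a triangle whose Euclidean side lengths in $\RR^3$ come out to $1/\sqrt{2}$, $1/\sqrt{6}$, and $\sqrt{2/3}$. Since $(1/\sqrt{2})^2 + (1/\sqrt{6})^2 = 2/3 = (\sqrt{2/3})^2$, it is right-angled at $(1/2,1/2,0)$, and the leg ratio $\sqrt{3}$ fixes the acute angles as $30^\circ$ and $60^\circ$. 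Equivalently, $\Omega_2$ is congruent to an equilateral triangle of side $L = \sqrt{2/3}$ cut along one of its altitudes.

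Next I would invoke the classical Lamé formula for the Dirichlet spectrum of the equilateral triangle of side $L$,
\begin{align*}
    \lambda_{m,n} = \frac{16\pi^2}{9 L^2}(m^2 + mn + n^2), \quad m,n \in \NN_{\geq 1},
\end{align*}
together with the standard symmetry reduction: under reflection across the bisecting altitude, the Lamé eigenfunctions split into a symmetric sector (which descends to Neumann modes on the half-triangle) and an antisymmetric sector (which vanishes on the cut and is exactly the Dirichlet spectrum of the half-triangle). Pairing $(m,n)$ with $(n,m)$ in Lamé's explicit trigonometric eigenfunctions, the diagonal modes $m=n$ turn out to be symmetric while each off-diagonal pair splits into one symmetric and one antisymmetric partner, so the Dirichlet spectrum of $\Omega_2$ is indexed by $m > n \geq 1$. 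The minimum of $m^2+mn+n^2$ over this range is $7$, attained at $(m,n) = (2,1)$, giving $\lambda_\mathrm{min}(\Omega_2) = 112\pi^2/(9L^2) = 112\pi^2/(9\cdot 2/3) = 56\pi^2/3$. Substituting into Theorem~\ref{thm:asymptotically_optimal_unitary_estimation} then gives $F_\mathrm{est}(n,3) = 1 - (56\pi^2/3)/(3n^2) + O(n^{-3}) = 1 - 56\pi^2/(9n^2) + O(n^{-3})$.

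The vertex/side computation and the final arithmetic are routine. The one step that genuinely deserves care — and the main obstacle — is the parity classification of the Lamé eigenfunctions under the bisecting reflection, which is what selects the admissible indices $m > n \geq 1$ and pins down the ground state as $(m,n) = (2,1)$ rather than $(1,1)$; I would verify this by writing out Lamé's explicit eigenfunctions (sums of plane waves across the equilateral triangle) and checking their behavior under the reflection that identifies the two halves of the equilateral triangle.
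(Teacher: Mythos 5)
Your proposal is correct and follows essentially the same route as the paper: both identify $\Omega_2$ as the hemi-equilateral (30--60--90) triangle, i.e.\ half of an equilateral triangle of side $\sqrt{2/3}=\sqrt{6}/3$, and obtain $\lambda_\mathrm{min}(\Omega_2)=56\pi^2/3$ from Lam\'e's classical spectrum before substituting into Theorem~\ref{thm:asymptotically_optimal_unitary_estimation}. The only difference is that the paper quotes the hemi-equilateral first eigenvalue $28\pi^2/(27r^2)$ directly from Lam\'e/McCartin, whereas you rederive it via the symmetric/antisymmetric splitting of the equilateral-triangle modes, selecting $(m,n)=(2,1)$ with $m^2+mn+n^2=7$ --- a correct unpacking of the same cited result.
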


\begin{figure}
    \centering
    \includegraphics[width=0.5\linewidth]{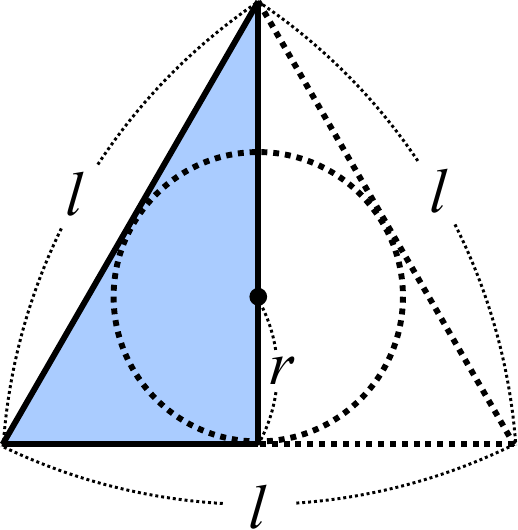}
    \caption{The region $\Omega_{2}$ defined in \eqref{eq:def_Omega} is the hemi-equilateral triangle given by the half of the equilateral triangle with the side length of $l={\sqrt{6}\over 3}$ as shown in the blue region in this figure.
    The inradius $r$ of the equilateral triangle is also shown in the figure.}
    \label{fig:triangle}
\end{figure}

\begin{proof}
    The region $\Omega_{2}$ is the hemi-equilateral triangle given by the half of the equilateral triangle with the side length of $l = {\sqrt{6} \over 3}$ (see Fig.~\ref{fig:triangle}).
    The minimum eigenvalue of the hemi-equilateral triangle given by the half of the equilateral triangle with the inradius $r$ is given by~\cite{lame1833memoire,mccartin2003eigenstructure}
    \begin{align}
        {28\pi^2 \over 27 r^2}.
    \end{align}
    Substituting $r={l\over 2\sqrt{3}} = {1\over 3\sqrt{2}}$, the minimum eigenvalue of $\Omega_2$ is given by
    \begin{align}
        \lambda_\mathrm{min}(\Omega_2) = {56\pi^2 \over 3}.
    \end{align}
    Therefore, we obtain
    \begin{align}
        F_\mathrm{est}(n, d=3)
        &= 1-{\lambda_\mathrm{min}(\Omega_2) \over 3 n^2}+O(n^{-3})\\
        &=1-{56 \pi^2 \over 9 n^2}+O(n^{-3}).
    \end{align}
\end{proof}

\subsection{Proof of Theorem~\ref{thm:asymptotically_optimal_unitary_estimation}}
\label{appendix_sec:proof_asymptotically_optimal_unitary_estimation}

The outline of the proof of Theorem~\ref{thm:asymptotically_optimal_unitary_estimation} is described as follows.
Since Eq.~\eqref{eq:estimation_fidelity_maximal_eigenvalue} holds, defining $h_{n,d}$ by
\begin{align}
    h_{n,d}\coloneqq \min \eig [\1-M_{\mathrm{est}}(n,d)] n^2,
\end{align}
we obtain
\begin{align}
    F_\mathrm{est}(n,d) = 1-{h_{n,d} \over n^2}.
\end{align}
Thus, Theorem~\ref{thm:asymptotically_optimal_unitary_estimation} is equivalent to
\begin{align}
    h_{n,d} = {\lambda_\mathrm{min}(\Omega_{d-1}) \over d} + O(n^{-1})
\end{align}
for $d=2,3$.
First, using a similar argument as shown in Ref.~\cite{christandl2021asymptotic}, we construct a unitary estimation protocol achieving the fidelity
\begin{align}
    F_\mathrm{est}(\{\mcT_a\}_a) = 1-{\lambda_\mathrm{min}(\Omega_{d-1}) \over dn^2} + O(n^{-3})
\end{align}
using the first eigenfunction $u$ of the Dirichlet Laplacian shown in Eq.~\eqref{eq:dirichlet_laplacian}.
This shows an upper bound on $h_{n,d}$ given by
\begin{align}
\label{eq:h(n,d)_upper_bound}
    h_{n,d} \leq {\lambda_\mathrm{min}(\Omega_{d-1}) \over d}+ O(n^{-1}).
\end{align}

Next, we use the finite element method to show the converse bound on $h_{n,d}$.
We first show a lower bound on $h_{n,d}$ by the minimum eigenvalue of a discrete Dirichlet Laplacian problem.
We consider an undirected graph with boundary $\overline{G}_{n,d} = (\overline{V}_{n,d}, \overline{E}_{n,d})$, where $\overline{V}_{n,d}$ is the set of vertices that is given by a disjoint union of the interior set $V_{n,d}$ and the boundary set $\partial V_{n,d}$:
\begin{align}
    \overline{V}_{n,d} = V_{n,d} \sqcup \partial V_{n,d},
\end{align}
and $\overline{E}_{n,d}$ is the set of edges.
The Laplacian $L(\overline{G}_{n,d})$ of $\overline{G}_{n,d}$ is defined by a $\abs{\overline{V}_{n,d}} \times \abs{\overline{V}_{n,d}}$ matrix given by
\begin{align}
    [L(\overline{G}_{n,d})]_{pq}\coloneqq
    \begin{cases}
        \deg p & (p=q)\\
        -\delta_{\{p,q\}\in \overline{E}_{n,d}} & (p\neq q)
    \end{cases}
\end{align}
for all $p,q\in \overline{V}_{n,d}$, where $\deg p$ is defined by
\begin{align}
    \deg p \coloneqq \#\{q\in \overline{V}_{n,d} \mid \{p,q\}\in \overline{E}_{n,d}\}.
\end{align}
We define the minimum eigenvalue $\lambda_\mathrm{min}(\overline{G}_{n,d})$ of the following discrete Dirichlet Laplacian problem given by
\begin{align}
    L(\overline{G}_{n,d}) \vec{u} &= \lambda \vec{u},\\
    u_{p} &= 0 \quad \forall {p}\in \partial \overline{V}_{n,d},
\end{align}
where $\vec{u} = (u_{p})_{p\in \overline{V}_{n,d}}$ is a $\abs{\overline{V}_{n,d}}$-dimensional real vector.
For a proper choice of the graph $\overline{G}_{n,d}$, we show a lower bound of $h_{n,d}$ given by
\begin{align}
    \lambda_\mathrm{min}(\overline{G}_{n,d})
    &\leq \min \eig [\1-M_\mathrm{est}(n,d)]d^2\\
    &= {d^2\over n^2} h_{n,d}.
\end{align}
We define a region $\Omega_{d-1}^{(n)}$ approximating the region $\Omega_{d-1}$ as
\begin{align}
\label{eq:inner_approximation_Omega}
    \Omega_{d-1}^{(n)} \subset r_n \Omega_{d-1}
\end{align}
for a proper choice of $r_n = 1+O(n^{-1})$, where $r_n \Omega_{d-1}$ is obtained by a scaling transformation by a factor of $r_n$ centered at the centroid $\vec{g}$, i.e.,
\begin{align}
    r_n \Omega_{d-1} &\coloneqq \{r_n (\vec{x}-\vec{g})+\vec{g} \mid x\in \Omega_{d-1}\},\\
    \vec{g}&\coloneqq {1\over \mathrm{Vol}(\Omega_{d-1})}\int_{\Omega_{d-1}} \vec{x} \dd \vec{x},
\end{align}
and $\mathrm{Vol}(\Omega_{d-1})$ is the volume of $\Omega_{d-1}$.
Using the argument of the finite element method, we show an upper bound on $\lambda_\mathrm{min}(\Omega_{d-1}^{(n)})$ given by
\begin{align}
\label{eq:lambda_FEM}
    \lambda_\mathrm{min}(\Omega_{d-1}^{(n)}) \leq {n^2\over d} \lambda_\mathrm{min}(\overline{G}_{n,d}) + O(n^{-2}).
\end{align}
From Eq.~\eqref{eq:inner_approximation_Omega}, $\lambda_\mathrm{min}(\Omega_{d-1}^{(n)})$ satisfies
\begin{align}
    \lambda_\mathrm{min}(\Omega_{d-1}^{(n)})
    &\geq {\lambda_\mathrm{min}(\Omega_{d-1}) \over r_n^2}\\
    &= \lambda_\mathrm{min}(\Omega_{d-1}) + O(n^{-1}).
\end{align}
Therefore, we obtain
\begin{align}
\label{eq:h(n,d)_lower_bound}
    h_{n,d}
    &\geq {\lambda_{\mathrm{min}}(\Omega_{d-1}) \over d}+O(n^{-1}),
\end{align}
which completes the proof of Theorem~\ref{thm:asymptotically_optimal_unitary_estimation}.

\subsubsection{Upper bound on $h_{n,d}$}
\label{subsubsec:proof_upper_bound}

We show the upper bound~\eqref{eq:h(n,d)_upper_bound} on $h_{n,d}$ for an arbitrary $d$.
Suppose $u$ is the minimum eigenfunction of the Dirichlet Laplacian problem~\eqref{eq:dirichlet_laplacian}.
Then, the minimum eigenvalue $\lambda_\mathrm{min}(\Omega)$ is given by the Rayleigh quotient
$R_{\Omega_{d-1}}(u)$ defined by
\begin{align}
\label{eq:rayleigh}
    R_{\Omega_{d-1}}(u)\coloneqq {\int_{\Omega_{d-1}} \abs{\nabla u}^2 \over \int_{\Omega_{d-1}} u^2}.
\end{align}
Defining $v_\mu$ by
\begin{align}
    v_\mu\coloneqq {u(\mu/n) \over \sqrt{\sum_{\mu\in \young{d}{n}} u(\mu/n)^2}},
\end{align}
we show that
\begin{align}
    \vec{v}^\dagger M_\mathrm{est} \vec{v} = 1-{R_{\Omega_{d-1}}(u)\over dn^2} +O(n^{-3})
\end{align}
to complete the proof of Eq.~\eqref{eq:h(n,d)_upper_bound}.
To this end, we first evaluate $(M_{\mathrm{est}})_{\mu \nu}$ for $\mu, \nu \in \young{d}{n} \setminus n\partial \Omega_{d-1}$, which is given by
\begin{align}
    (M_{\mathrm{est}})_{\mu \nu}
    =
    \begin{cases}
        {1\over d} & (\mu=\nu)\\
        {1\over d^2} & (\exists i\neq j \text{ s.t. } \mu = \nu+f_{ij})\\
        0 & (\text{otherwise})
    \end{cases}.
\end{align}
Therefore, we obtain
\begin{align}
    &\vec{v}^\dagger M_\mathrm{est} \vec{v}\nonumber\\
    &= {{1\over d^2} \sum_{\mu\in \young{d}{n}} \left[d u(\mu/n)^2 + \sum_{i\neq j} u(\mu/n)u(\mu/n+f_{ij}/n)\right] \over \sum_{\mu\in \young{d}{n}} u(\mu/n)^2}\\
    &= {{1\over d^2} \sum_{\mu\in \young{d}{n}} \sum_{i,j=1}^{d}  u(\mu/n)u(\mu/n+f_{ij}/n) \over \sum_{\mu\in \young{d}{n}} u(\mu/n)^2}.
\end{align}
As shown in Theorem~1.4 and Lemma~6.2 of Ref.~\cite{christandl2021asymptotic}, this is evaluated as
\begin{align}
    &\vec{v}^\dagger M_\mathrm{est} \vec{v}\nonumber\\
    &= 1-{\int_{\Omega_{d-1}} u(\vec{x}) \Delta u(\vec{x})  \dd \vec{x}\over dn^2\int_{\Omega_{d-1}} u(\vec{x})^2 \dd \vec{x}} + O(n^{-3})\\
    &= 1-{R_{\Omega_{d-1}}(u)\over dn^2} +O(n^{-3}).
\end{align}

\subsubsection{Lower bound on $h_{n,d}$}
\label{subsubsec:proof_lower_bound}

We show the lower bound~\eqref{eq:h(n,d)_lower_bound} on $h_{n,d}$ for $d=2,3$.
We define an undirected graph $G_{n,d} = (V_{n,d}, E_{n,d})$ with the set of vertices $V_{n,d}$ and the set of edges given by
\begin{align}
    V_{n,d} &\coloneqq \left\{\mu/n \;\middle| \; \mu \in \young{d}{n}\right\} \subset \RR^d,\\
    E_{n,d} &\coloneqq \left\{\{p,q\} \;\middle|\; p,q\in V_{n,d}, \exists i\neq j \; \mathrm{s.t.} \; q = p + f_{ij} /n\right\}.
\end{align}
We also define an undirected graph $\overline{G}_{n,d} = (\overline{V}_{n,d}, \overline{E}_{n,d})$ by (see also Fig.~\ref{fig:grid_with_caption})
\begin{align}
    \overline{V}_{n,d} &\coloneqq V_{n,d} \sqcup \partial \overline{V}_{n,d}\\
    \partial \overline{V}_{n,d}&\coloneqq \left\{v+f_{ij}/n \middle| v\in V_{n,d}, \exists i\neq j\right\}\setminus V_{n,d},\\
    \overline{E}_{n,d} &\coloneqq \left\{\{v,w\} \;\middle|\; v, w\in \overline{V}_{n,d}, \exists i\neq j \; \mathrm{s.t.} \; w = v + f_{ij}/n\right\}.
\end{align}
Defining a $\abs{V_{n,d}}\times \abs{V_{n,d}}$ matrix $L_{n,d}$ by
\begin{align}
    [L_{n,d}]_{\mu\nu} \coloneqq [L(\overline{G}_{n,d})]_{p_\mu p_\nu}
\end{align}
by setting $p_\mu = \mu/n \in V_{n,d}$ and $p_\nu = \nu/n \in V_{n,d}$ for $\mu, \nu\in\young{d}{n}$, $\lambda_\mathrm{min}(\overline{G}_{n,d})$ is given by
\begin{align}
    \lambda_\mathrm{min}(\overline{G}_{n,d}) = \min \eig L_{n,d}.
\end{align}
Then, we can show that
\begin{align}
     L_{n,d} \leq [\1-M_{\mathrm{est}}(n,d)] d^2
\end{align}
since for all $\mu\neq \nu \in V$,
\begin{align}
    [L_{n,d}]_{\mu\nu}
    &= -\#[(\mu+_d\square)\cap (\nu+_d\square)]\\
    &= [\1-M_{\mathrm{est}}(n,d)]_{\mu\nu} d^2,
\end{align}
and for all $\mu\in V$,
\begin{align}
    &[L_{n,d}]_{\mu\mu}\nonumber\\
    &= \sum_{\mu'\neq \mu} \#[(\mu+_d\square) \cap (\mu'+_d\square)]\\
    &= \left[\sum_{\mu'\in V}\#[(\mu+_d\square) \cap (\mu'+_d\square)] - \#(\mu+_d\square)\right]\\
    &= \left[\sum_{i,j=1}^{d} \delta_{\mu+e_i\in \young{d}{n+1}} \delta_{\mu+f_{ij}\in \young{d}{n}} - \#(\mu+\square)\right]\\
    &\leq \left[d^2 - \#(\mu+\square)\right]\\
    &= [\1-M_{\mathrm{est}}(n,d)]_{\mu\mu} d^2
\end{align}
holds.
Therefore, we obtain
\begin{align}
    h_{n,d} \geq {n^2\over d^2} \lambda_\mathrm{min}(\overline{G}_{n,d}).
\end{align}

We define the set of $(d-1)$-dimensional simplicies defined by
\begin{align}
    \Xi_{n,d} \coloneqq \left\{\overline{p_1\cdots p_d} \; \middle| \; \forall i\neq j, \{p_i,p_j\}\in \overline{E}_{n,d}\right\},
\end{align}
where $\overline{p_1\cdots p_d}$ is the $(d-1)$-dimensional simplex defined by
\begin{align}
    \overline{p_1\cdots p_d}\coloneqq \left\{\sum_{i=1}^{d} a_i p_i \; \middle| \; a_i\geq 0, \sum_{i=1}^{d} a_i = 1\right\}.
\end{align}
For $d=2,3$, defining a region $\Omega^{(n)}_{d-1} \subset \RR^{d-1}$ by
\begin{align}
    \Omega^{(n)}_{d-1}\coloneqq \bigcup \Xi_{n,d},
\end{align}
the region $\Omega^{(n)}_{d-1}$ approximates the region $\Omega_{d-1}$ as Eq.~\eqref{eq:inner_approximation_Omega}, and $\Xi_{n,d}$ is a triangulation of $\Omega_{d-1}^{(n)}$.
Therefore, by using the finite element method, the minimum eigenvalue $\lambda_\mathrm{min}(\Omega_{1}^{(n)})$ of the Dirichlet Laplacian is upper bounded by the minimum eigenvalue of $M(\Xi_{n,d})^{-1} K(\Xi_{n,d})$, where $K(\Xi_{n,d})$ and $M(\Xi_{n,d})$ are the stiffness matrix and the mass matrix defined in Eqs.~\eqref{eq:def_stiffness} and \eqref{eq:def_mass} (see Appendix~\ref{appendix_sec:review_FEM} for the detail).
We show Eq.~\eqref{eq:lambda_FEM} for the cases of $d=2,3$ below.

For $d=2$, since all the segments in $\Xi_{n,2}$ have length ${\sqrt{2}\over n}$, the stiffness matrix and the mass matrix are given by
\begin{align}
    [K(\Xi_{n,2})]_{pq} &= 
    \begin{cases}
        \sqrt{2}n & (p=q)\\
        -{n\over \sqrt{2}} & (\{p,q\}\in E_{n,2})\\
        0 & (\mathrm{otherwise})
    \end{cases},\\
    [M(\Xi_{n,2})]_{pq} &= 
    \begin{cases}
        {2\sqrt{2} \over 3n}& (p=q)\\
        {\sqrt{2} \over 6n} & (\{p,q\}\in E_{n,2})\\
        0 & (\mathrm{otherwise})
    \end{cases}
\end{align}
for all $v, w\in E'_{n,2}$.
Since $L_{n,2}$ is given by
\begin{align}
    [L_{n,2}]_{pq} = \begin{cases}
        2 & (p=q)\\
        -1 & (\{p,q\}\in E_{n,2})\\
        0 & (\mathrm{otherwise})
    \end{cases},
\end{align}
we can write
\begin{align}
    K(\Xi_{n,2}) &= {n\over \sqrt{2}} L_{n,2},\\
    M(\Xi_{n,2}) &= {\sqrt{2}\over n} \left[\1 - {L_{n,2}\over 6}\right].
\end{align}
From Theorem~\ref{thm:FEM}, we obtain
\begin{align}
    \lambda_\mathrm{min}(\Omega_1^{(n)})
    &\leq \min \eig M(\Xi_{n,2})^{-1} K(\Xi_{n,2})\\
    &= {n^2 \over 2} \min \eig \left[\1- {L_{n,2}\over 6}\right]^{-1}L_{n,2}\\
    &= {n^2\over 2}{\lambda_\mathrm{min}(\overline{G}_{n,2}) \over 1-{1\over 6} \lambda_\mathrm{min}(\overline{G}_{n,2})}.
\end{align}
Since $\lambda_\mathrm{min}(\overline{G}_{n,2}) = O(n^{-2})$ holds, we obtain
\begin{align}
    \lambda_\mathrm{min}(\Omega_1^{(n)})\leq {n^2\over 2}\lambda_\mathrm{min}(\overline{G}_{n,2}) + O(n^{-2}),
\end{align}
i.e., Eq.~\eqref{eq:lambda_FEM} holds.

\begin{figure}
    \centering
    \includegraphics[width=\linewidth]{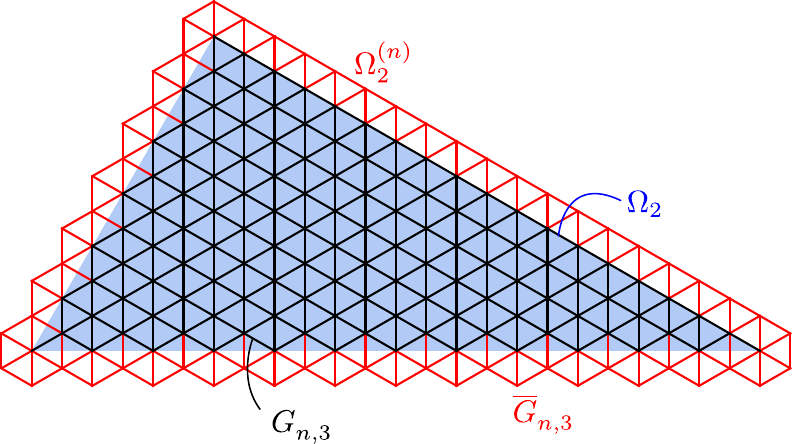}
    \caption{Illustration of the graphs $G'_{n,3}$, $\overline{G}'_{n,3}$ and the regions $\Omega_2$, $\Omega_2^{(n)}$ for $n=36$.}
    \label{fig:grid_with_caption}
\end{figure}
For $d=3$, since all the triangles in $\Xi_{n,3}$ are equilateral triangles with the side length ${\sqrt{2} \over n}$, the stiffness matrix and the mass matrix are given by~\cite{jumonji2008eigenvalue}
\begin{align}
    [K(\Xi_{n,3})]_{pq} &= 
    \begin{cases}
        2\sqrt{3} & (p=q)\\
        -{1\over \sqrt{3}} & (\{p,q\}\in E_{n,3})\\
        0 & (\mathrm{otherwise})
    \end{cases},\\
    [M(\Xi_{n,3})]_{pq} &= 
    \begin{cases}
        {\sqrt{3} \over 2 n^2} & (p=q)\\
        {\sqrt{3} \over 12n^2} & (\{p,q\}\in E_{n,3})\\
        0 & (\mathrm{otherwise})
    \end{cases}
\end{align}
for all $v, w\in E_{n,3}$.
Since $L_{n,3}$ is given by
\begin{align}
    [L_{n,3}]_{pq} = \begin{cases}
        6 & (p=q)\\
        -1 & (\{p,q\}\in E_{n,3})\\
        0 & (\mathrm{otherwise})
    \end{cases},
\end{align}
we can write
\begin{align}
\label{eq:stiffness_d=3}
    K(\Xi_{n,3}) &= {1\over \sqrt{3}} L_{n,3},\\
\label{eq:mass_d=3}
    M(\Xi_{n,3}) &= {\sqrt{3} \over n^2} \left[\1- {L_{n,3}\over 12}\right].
\end{align}
From Theorem~\ref{thm:FEM}, we obtain
\begin{align}
    \lambda_\mathrm{min}(\Omega_2^{(n)})
    &\leq \min \eig M(\Xi_{n,3})^{-1} K(\Xi_{n,3})\\
    &= {n^2 \over 3} \min \eig \left[\1- {L_{n,3}\over 12}\right]^{-1}L_{n,3}\\
    &= {n^2\over 3}{\lambda_\mathrm{min}(\overline{G}_{n,3}) \over 1-{1\over 12}\lambda_\mathrm{min}(\overline{G}_{n,3})}.
\end{align}
Since $\lambda_\mathrm{min}(\overline{G}_{n,3}) = O(n^{-2})$ holds, we obtain
\begin{align}
    \lambda_\mathrm{min}(\Omega_2^{(n)})\leq {n^2\over 3}\lambda_\mathrm{min}(\overline{G}_{n,3}) + O(n^{-2}),
\end{align}
i.e., Eq.~\eqref{eq:lambda_FEM} holds.
This completes the proof of Theorem~\ref{thm:asymptotically_optimal_unitary_estimation}.

\subsection{Near-optimal unitary estimation fidelity in arbitrary dimension based on Kahn's protocol}

Kahn~\cite{kahn2007fast} obtained an upper bound on $h(d)$ given by\footnote{There is a typo in Ref.~\cite{kahn2007fast} on the definition of $S_{d-1}$; see Appendix~\ref{appendix_subsec:kahn_derivation} for the derivation.}
\begin{align}
\label{eq:kahn_integral}
\begin{split}
    h(d) &\leq {2d(A_d -B_d) - (d+1) C_d \over d^2 D_d},\\
    A_d &\coloneqq \int_{S_{d-1}} \left(\sum_{i=1}^{d} (x_{\{i\}})^2 \right)\dd \vec{x},\\
    B_d&\coloneqq \int_{S_{d-1}} \left(\sum_{i=2}^{d} x_{\{i\}} x_{\{i-1\}}\right) \dd \vec{x},\\
    C_d&\coloneqq \int_{S_{d-1}} (x_{\{d\}})^2 \dd \vec{x},\\
    D_d&\coloneqq \int_{S_{d-1}} \prod_{i=1}^{d} x_i^2 \dd \vec{x},\\
    S_{d-1}&\coloneqq \{\vec{x}=(x_1,\ldots,x_d) \mid x_i\geq 0,  \sum_{i=1}^{d} i x_i = 1\},
\end{split}
\end{align}
where $x_{\{i\}}$ is defined by
\begin{align}
    x_{\{i\}}\coloneqq \prod_{j\neq i} x_j.
\end{align}
We rederive this formula in Appendix~\ref{appendix_subsec:kahn_derivation}.
We obtain analytical formulae for $A_d, B_d, C_d, D_d$ to show the following corollary (see Appendix~\ref{appendix_subsec:kahn_solution} for the detail).
\begin{corollary}
\label{cor:kahn}
    \begin{align}
        h(d)\leq {1\over 6} (d+1)(d-1)(3d-2)(3d-1).
    \end{align}
\end{corollary}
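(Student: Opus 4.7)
The plan is to evaluate each of the four moment integrals $A_d, B_d, C_d, D_d$ in Eq.~\eqref{eq:kahn_integral} in closed form via reduction to Dirichlet integrals, then simplify the ratio that bounds $h(d)$. First, I would change variables $y_i \coloneqq i x_i$, which maps the weighted simplex $S_{d-1}$ bijectively onto the standard simplex $\Delta^{d-1} = \{y_i \geq 0, \sum_i y_i = 1\}$. Parametrizing both by the first $d-1$ coordinates introduces a Jacobian $1/(d-1)!$ that is common to all four integrals and so cancels in the ratio. Substituting $x_i = y_i/i$ turns the integrands into monomials in $y$: $\prod_i x_i^2 = (d!)^{-2}\prod_i y_i^2$, $x_{\{i\}}^2 = i^2 (d!)^{-2}\prod_{j\neq i} y_j^2$, and $x_{\{i\}}\,x_{\{i-1\}} = i(i-1)(d!)^{-2}\, y_i y_{i-1} \prod_{j \neq i,i-1} y_j^2$. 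Each of these integrates immediately via the Dirichlet formula $\int_{\Delta^{d-1}} \prod_i y_i^{\alpha_i - 1}\,dy = \prod_i \Gamma(\alpha_i)/\Gamma(\sum_i \alpha_i)$, producing $(3d-1)!$ for $D_d$ and $(3d-3)!$ for each summand of $A_d, B_d, C_d$.

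Second, using the power sums $\sum_{i=1}^d i^2 = d(d+1)(2d+1)/6$ and $\sum_{i=2}^d i(i-1) = d(d+1)(d-1)/3$, I would perform the summations and collect the four integrals modulo the common prefactor $P \coloneqq [(d-1)!(d!)^2]^{-1}$. Two algebraic simplifications are decisive. The first is in $A_d - B_d$: factoring out $d(d+1)/3$ reduces the bracket to $(2d+1)-(d-1) = d+2$, so that $A_d - B_d = P \cdot 2^{d-2} d(d+1)(d+2)/[3(3d-3)!]$. The second is in $2d(A_d - B_d) - (d+1)C_d$: after factoring $P \cdot 2^{d-1} d^2 (d+1)/[3(3d-3)!]$, the residual bracket is $(d+2) - 3 = d-1$, leaving exactly $P \cdot 2^{d-1} d^2 (d+1)(d-1)/[3(3d-3)!]$.

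Finally, dividing by $d^2 D_d = P \cdot 2^d d^2/(3d-1)!$ cancels the $d^2$ factor and one factor of $2$, while replacing $(3d-3)!$ in the denominator by $(3d-1)! = (3d-1)(3d-2)(3d-3)!$, yielding the claimed bound $h(d) \leq \frac{1}{6}(d+1)(d-1)(3d-2)(3d-1)$. The main obstacle is purely bookkeeping: tracking the $(d!)^{-2}$ prefactor, the $2^{d-1}$-versus-$2^{d-2}$ distinction between the single-exponent-drop terms ($A_d, C_d$) and the double-drop terms ($B_d$), and the two telescoping reductions in the $A_d - B_d$ and $2d(A_d - B_d) - (d+1)C_d$ brackets. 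No new conceptual ingredient is required beyond the Dirichlet formula, since all integrands are monomial and the necessary summation identities are elementary.
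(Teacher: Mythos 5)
Your proposal is correct, and it reaches exactly the closed forms of Lemma~\ref{lem:kahn_integral} (namely $A_d/D_d=\tfrac{1}{12}d(d+1)(2d+1)(3d-1)(3d-2)$, $B_d/D_d=\tfrac{1}{12}d(d+1)(d-1)(3d-1)(3d-2)$, $C_d/D_d=\tfrac{1}{2}d^2(3d-1)(3d-2)$), after which the two telescoping simplifications $(2d+1)-(d-1)=d+2$ and $(d+2)-3=d-1$ you describe give the stated bound. However, your route to those closed forms is genuinely different from the paper's. The paper exploits the cone-like recursive structure $S_{d-1}=\{((1-dx_d)\vec{x}',x_d)\}$ to derive coupled recursions for $A_d,B_d,C_d,D_d$ in $d$, which it then solves against explicitly computed base cases $A_2,B_2,B_3,C_2,D_2$. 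You instead apply the linear substitution $y_i=ix_i$ mapping $S_{d-1}$ onto the standard simplex and evaluate every integral in one shot by the Dirichlet (multivariate Beta) formula, with exponent patterns $(3,\dots,3)$, $(3,\dots,1,\dots,3)$ and $(3,\dots,2,2,\dots,3)$ producing the $2^d/(3d-1)!$, $2^{d-1}/(3d-3)!$ and $2^{d-2}/(3d-3)!$ factors you track. Your approach buys a non-recursive, uniform derivation with no base cases and no induction to verify; the paper's recursion buys locality (each step is a one- or two-variable integral) at the cost of guessing and checking the closed forms. One minor imprecision on your side: the constant relating the surface measure on the tilted hyperplane $\sum_i ix_i=1$ to the coordinate measure $dy_1\cdots dy_{d-1}$ is not exactly $1/(d-1)!$ (there is an additional factor of the form $\sqrt{\sum_i i^2}/d$ from the parametrization of the hyperplane), but since this constant is identical for all four integrals it cancels in the ratio $[2d(A_d-B_d)-(d+1)C_d]/(d^2D_d)$, exactly as you argue, so the conclusion is unaffected.
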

\begin{proof}
    This corollary directly follows from Lemma~\ref{lem:kahn_integral} in Appendix~\ref{appendix_subsec:kahn_solution}.
\end{proof}

\begin{figure*}
    \centering
    \includegraphics[width=\linewidth]{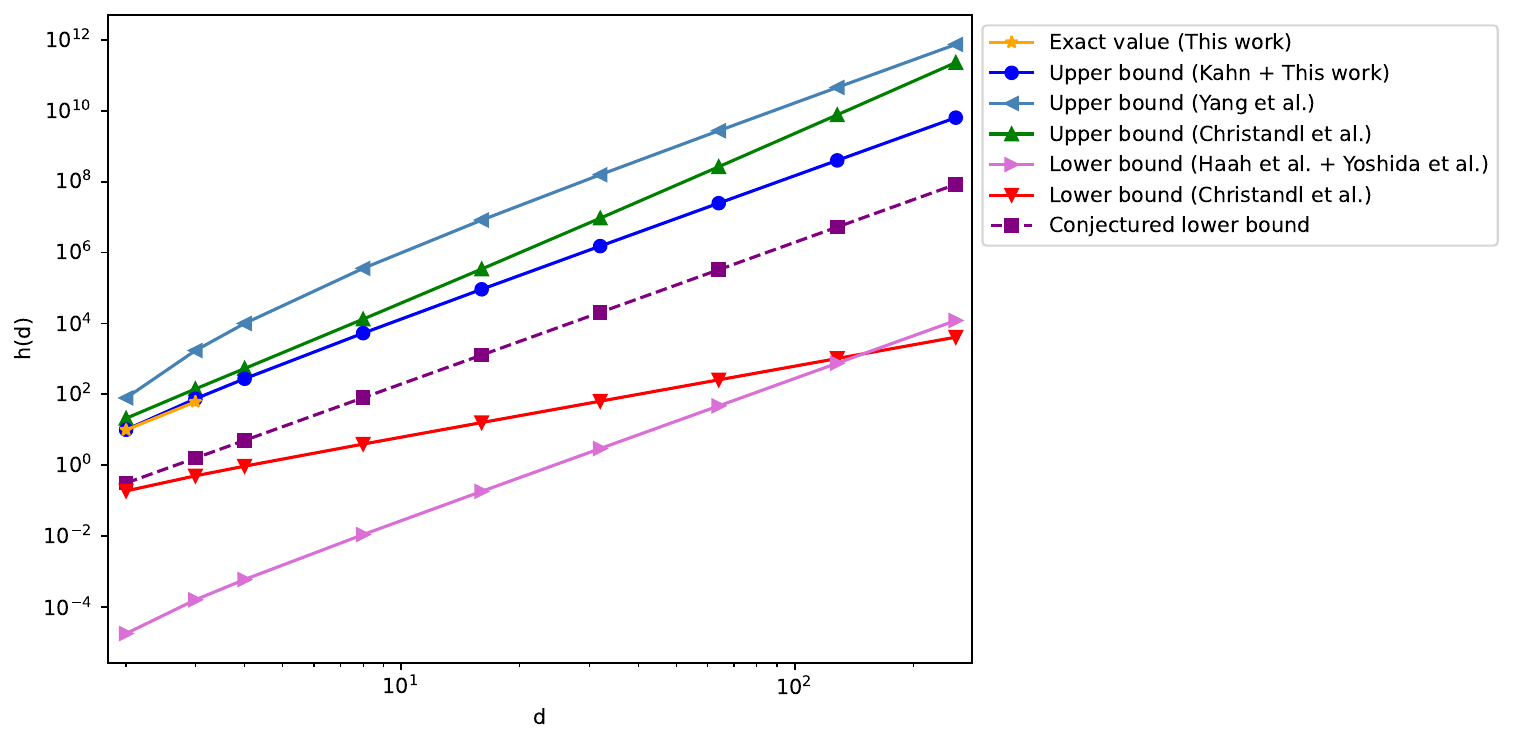}
    \caption{Comparison of the exact value of $h(d)$ for $d=2, 3$ ($h(3)$ is shown in Corollary~\ref{cor:h(d)_d=3}) and the upper bound of $h(d)$ (Corollary~\ref{cor:kahn}) with the previous works on upper and lower bounds on $h(d)$ shown in Eqs.~\eqref{eq:christandl}, \eqref{eq:yang} and \eqref{eq:haah}. The upper bound shown in Corollary~\ref{cor:kahn} (blue line) is a good estimate for the exact values of $h(d)$ for $d=2,3$, and is the best estimate for $d\geq 4$ compared to the previous works. The upper bound on Eq.~\eqref{eq:haah} is not plotted since the exact value of $\gamma$ is not given in Ref.~\cite{haah2023query}. We also plot the conjectured lower bound in Eq.~\eqref{eq:conjecture_lower_bound} as a dashed line.}
    \label{fig:comparison}
\end{figure*}

We compare Corollary~\ref{cor:h(d)_d=3} and \ref{cor:kahn} with the following previous works on upper and lower bounds on $h(d)$ in Fig.~\ref{fig:comparison}:

\begin{itemize}
    \item Christandl et al.~\cite{christandl2021asymptotic}:
    \begin{align}
    \label{eq:christandl}
        {d^2-1\over 16} \leq h(d) \leq {d^5 \over 4\sqrt{2}} + O(d^4).
    \end{align}
    \item Yang et al.~\cite{yang2020optimal}:
    \begin{align}
    \label{eq:yang}
        h(d)\leq 18 \pi^2 d^4 + O(d^3).
    \end{align}
    \item Haah et al.~\cite{haah2023query} + Yoshida et al.~\cite{yoshida2024one}:
    \begin{align}
    \label{eq:haah}
        \alpha(d^2 -\beta^2)^2 \leq h(d)\leq \gamma d^4
    \end{align}
    for constants
    \begin{align}
    \label{eq:def_alpha}
        \alpha &\coloneqq \left[{25-\ln 2 \over 20000({\pi/100}+\ln 2)}\right]^2 \approx 2.81\times 10^{-6},\\
    \label{eq:def_beta}
        \beta &\coloneqq \sqrt{50(\pi/100+\ln 2) \over 25-\ln 2} \approx 1.22,
    \end{align}
    and $\gamma>0$.
\end{itemize}
The upper bound and lower bound shown in Eq.~\eqref{eq:christandl} are better than Eqs.~\eqref{eq:yang} and \eqref{eq:haah} when $d$ is small ($d\lesssim 10^2$), while their scaling when $d\to \infty$ is not tight.
As shown in Fig.~\ref{fig:comparison}, the upper bound shown in Corollary~\ref{cor:kahn} is the best estimate compared to the previous works for all $d$.
For $d=2, 3$, upper bounds in Corollary~\ref{cor:kahn} are given by
\begin{align}
    h(2) &\leq 10,\\
    h(3) &\leq {224\over 3} \approx 74.67.
\end{align}
Since $h(2), h(3)$ are given by (Ref.~\cite{bagan2004entanglement} and Corollary~\ref{cor:h(d)_d=3})
\begin{align}
    h(2) &= \pi^2 \approx 9.87,\\
    h(3) &= {56\pi^2\over 9} \approx 61.41,
\end{align}
these upper bounds are good estimates of the true values.
The upper bound in Corollary~\ref{cor:kahn} is also asymptotically tight since $h(d)=\Theta(d^4)$ holds from Eqs.~\eqref{eq:yang} and \eqref{eq:haah}.

\section{Conclusion}
\label{sec:conclusion}
This work shows that the optimal fidelity of unitary estimation in $\SU(3)$ is given by $F_\mathrm{est}(n,d=3) = 1-{56\pi^2 \over 9n^2} + O(n^{-3})$ (Corollary~\ref{cor:h(d)_d=3}).
This result determines the coefficient of $O(n^{-2})$ term, denoted by $h(d)$, as $h(d=3) = {56\pi^2 \over 9}$.
By using the one-to-one correspondence of unitary estimation and port-based teleportation shown in Ref.~\cite{yoshida2024one}, this result partially solves an open problem to determine the coefficient of $O(N^{-2})$ term in $F_\mathrm{PBT}(N,d)$ raised in Refs.~\cite{christandl2021asymptotic, leditzky2022optimality} for $d=3$.
The proof of achievability is similar to the proof for port-based teleportation shown in Ref.~\cite{christandl2021asymptotic}, which shows a lower bound on the optimal fidelity of unitary estimation given by
\begin{align}
\label{eq:h(d)_upper_bound}
    h(d)\leq {\lambda_\mathrm{min}(\Omega_{d-1}) \over d}
\end{align}
for an arbitrary $d$.
The proof of the optimality is based on the finite element method, which can only be applied for $d=2,3$.
This work also shows the analytical formula for the fidelity of unitary estimation protocol shown in Ref.~\cite{kahn2007fast} for general $d$ (Corollary~\ref{cor:kahn}).
This formula gives the best upper bound on $h(d)$ compared to the previous works~\cite{christandl2021asymptotic, yang2020optimal, haah2023query}, and is asymptotically tight.

Though this work shows the tightness of the upper bound~\eqref{eq:h(d)_upper_bound} on $h(d)$ only for $d=2,3$, we conjecture the tightness of Eq.~\eqref{eq:h(d)_upper_bound} for an arbitrary $d$:

\begin{conjecture}
\label{conj:h(d)}
    \begin{align}
        h(d) = {\lambda_\mathrm{min}(\Omega_{d-1}) \over d}
    \end{align}
    holds for an arbitrary $d$.
\end{conjecture}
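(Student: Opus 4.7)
The upper bound $h(d)\le \lambda_\mathrm{min}(\Omega_{d-1})/d$ is already established in Subsection~\ref{subsubsec:proof_upper_bound} for arbitrary $d$ via the variational Ansatz $v_\mu\propto u(\mu/n)$ built from the first Dirichlet eigenfunction $u$. The plan is therefore to extend the finite element argument of Subsection~\ref{subsubsec:proof_lower_bound} from $d=2,3$ to arbitrary $d$ so as to obtain the matching lower bound $h_{n,d}\ge \lambda_\mathrm{min}(\Omega_{d-1})/d+O(n^{-1})$, i.e.\ Eq.~\eqref{eq:h(n,d)_lower_bound}, for all $d$.

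The first step is geometric: one must show that for every $d$ the simplicial complex $\Xi_{n,d}$ defined via pairwise $f_{ij}/n$-adjacency is a genuine triangulation of a region $\Omega_{d-1}^{(n)}\subset r_n\Omega_{d-1}$ with $r_n=1+O(n^{-1})$. I would identify $\Xi_{n,d}$ with the restriction to the fundamental Weyl chamber of the Kuhn--Freudenthal alcove decomposition of the root lattice of type $A_{d-1}$: explicitly, for a base point $p$ and a permutation $\sigma\in S_d$ the simplex with vertices $p,\, p+f_{\sigma(1)\sigma(2)}/n,\, \ldots,\, p+f_{\sigma(1)\sigma(d)}/n$ has all pairwise differences in $\{f_{ij}/n\}$. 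The verifications needed are that these simplices cover some open region without overlap, that the boundary of that region lies within an $O(n^{-1})$ neighborhood of $\partial \Omega_{d-1}$, and that every interior edge is shared by a common number $m$ of simplices and every interior vertex by exactly $md$ of them.

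The second step is an essentially algebraic assembly computation. Since every edge of $\Xi_{n,d}$ has length $\sqrt{2}/n$, every $(d-1)$-simplex $T\in\Xi_{n,d}$ is a regular simplex of some common volume $V$, and the standard FEM formulae for such a simplex give
\begin{align}
    K_T = \frac{V n^2}{d}\,L_d,\qquad M_T = \frac{V}{d(d+1)}(\1+J),
\end{align}
where $J$ is the $d\times d$ all-ones matrix and $L_d=d\1-J$ is the Laplacian of the complete graph on $d$ vertices. Summing over $T\in\Xi_{n,d}$ and using the incidence counts from step one, the interior blocks of the global matrices become
\begin{align}
    K(\Xi_{n,d}) = \frac{Vmn^2}{d}\,L_{n,d},\qquad M(\Xi_{n,d}) = Vm\left[\1-\frac{L_{n,d}}{d(d+1)}\right].
\end{align}
Theorem~\ref{thm:FEM} combined with $\lambda_\mathrm{min}(\overline{G}_{n,d})=O(n^{-2})$ then yields Eq.~\eqref{eq:lambda_FEM} for arbitrary $d$, and monotonicity of Dirichlet eigenvalues applied to Eq.~\eqref{eq:inner_approximation_Omega} delivers the desired lower bound on $h_{n,d}$.

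The main obstacle is step one: while inspection suffices for $d=2,3$ (cf.~Fig.~\ref{fig:grid_with_caption}), for $d\ge 4$ one needs a genuine combinatorial argument controlling the chamber-restricted alcove decomposition. The delicate point is near the chamber walls $\mu_i=\mu_{i+1}$ and the outer wall $\mu_d=0$: there the degree structure of $\overline{G}_{n,d}$ degenerates, the constants $m$ and $md$ from step one break down, and one must argue that the resulting boundary corrections to $K(\Xi_{n,d})$ and $M(\Xi_{n,d})$ are of strictly lower order than the leading $O(n^{-2})$ behavior of $\lambda_\mathrm{min}(\overline{G}_{n,d})$. The subsequent assembly step is largely mechanical and should extend routinely once the triangulation has been put on solid geometric footing.
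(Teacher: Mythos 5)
This statement is stated in the paper as an open \emph{conjecture}: the paper proves the matching lower bound on $h_{n,d}$ only for $d=2,3$ and explicitly explains why the extension you propose does not go through. Your plan is, almost verbatim, the ``natural way'' discussed at the end of Sec.~\ref{sec:conclusion}, and the fatal gap is in your step one --- but not where you locate it. You treat the only obstacle as boundary bookkeeping near the chamber walls $\mu_i=\mu_{i+1}$ and $\mu_d=0$, with the bulk covering taken for granted. In fact the construction fails in the bulk. Every simplex in $\Xi_{n,d}$ has all edges of length $\sqrt{2}/n$, hence is a congruent regular $(d-1)$-simplex, and congruent regular $(d-1)$-simplices cannot tile any open subset of $\RR^{d-1}$ once $d\geq 4$, because the dihedral angle $\arccos\bigl(1/(d-1)\bigr)$ is not of the form $2\pi/m$ for an integer $m$ (this is exactly the paper's footnote; for $d=4$ it is the classical fact that regular tetrahedra do not fill space). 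Concretely, for $d=4$ the relevant lattice is the $A_3$ (fcc) root lattice, whose Delaunay cells are regular tetrahedra \emph{and} regular octahedra; the $4$-cliques of the adjacency graph yield only the tetrahedra, which meet one another only along edges, so $\bigcup\Xi_{n,4}$ omits octahedral holes occupying a constant fraction of the volume of $\Omega_3$. Your asserted incidence constants (``every interior edge shared by $m$ simplices, every interior vertex by $md$'') cannot hold for any face-to-face covering when $d\geq4$.

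The downstream consequence is that Eq.~\eqref{eq:lambda_FEM} cannot be obtained this way. Theorem~\ref{thm:FEM} requires $\Xi_{n,d}$ to be a genuine triangulation of $\Omega_{d-1}^{(n)}$, with adjacent simplices meeting in common faces, so that the piecewise-linear interpolant is a single $H_0^1$ function on a connected domain approximating $\Omega_{d-1}$; when the simplices meet only along lower-dimensional faces, the assembled $K(\Xi_{n,d})$ and $M(\Xi_{n,d})$ do not compute the Rayleigh quotient of any admissible test function on such a domain (indeed $\lambda_\mathrm{min}\bigl(\bigcup\Xi_{n,4}\bigr)=\Theta(n^2)$, since its interior components have diameter $O(1/n)$, while the right-hand side of Eq.~\eqref{eq:lambda_FEM} is $O(1)$). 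Your per-element formulas for $K_T$ and $M_T$ and the assembly algebra are fine and do reproduce Eqs.~\eqref{eq:stiffness_d=3} and \eqref{eq:mass_d=3} for $d=3$, but they only become relevant after a covering exists. The paper's suggested escape --- admitting non-simplicial cells (e.g.\ regular octahedra for $d=4$) with higher-order test functions on them, at the cost of losing the clean proportionality between the stiffness matrix and the graph Laplacian $L_{n,d}$, and hence the link to $\1-M_\mathrm{est}$ --- is precisely what is left open, so your proposal does not close Conjecture~\ref{conj:h(d)}.
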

If this conjecture is true, we can derive the following lower bound on $h(d)$ as shown in Ref.~\cite{christandl2021asymptotic}:
\begin{align}
\label{eq:conjecture_lower_bound}
    h(d) \geq {\pi d^4 \over 8e^3},
\end{align}
which corresponds to the dashed line in Fig.~\ref{fig:comparison}.
The ratio of the best-known upper bound and lower bound on $h(d)$ is given by
\begin{align}
    &{{1\over 6} (d+1)(d-1)(3d-2)(3d-1) \over \alpha(d^2-\beta^2)^2}\nonumber\\
    &= {3\over 2\alpha}+O(d^{-1})\\
    &\approx 5.33\times 10^5+O(d^{-1}),
\end{align}
where $\alpha$ and $\beta$ are defined in Eqs.~\eqref{eq:def_alpha} and \eqref{eq:def_beta}.
If we can show Conjecture~\ref{conj:h(d)}, we can shrink this ratio to
\begin{align}
    &{{1\over 6} (d+1)(d-1)(3d-2)(3d-1) \over {\pi d^4 \over 8e^3}}\nonumber\\
    &= {12e^3\over \pi} + O(d^{-1})\\
    &\approx 7.67\times 10^1+O(d^{-1}),
\end{align}
which shows a significant improvement on the lower bound on $h(d)$ (see also Fig.~\ref{fig:comparison}).
However, we have a difficulty on extending the proof of Theorem~\ref{thm:asymptotically_optimal_unitary_estimation} to $d\geq 4$ as discussed below.

One natural way to extend the proof shown in Sec.~\ref{subsubsec:proof_lower_bound} for $d\geq 4$ is to define a triangulation $\Xi_{n,d} = \{\xi_a\}_a$ of a region $\Omega_d^{(n)}$ such that any edge in a simplex $\xi_a\in \Xi_{n,d}$ is taken from $\overline{E}'_{n,d}$.
Since the length of all the edges in $\overline{E}'_{n,d}$ is the same (${\sqrt{2} \over n}$), all simplicies $\xi_a$ should be a regular simplex with the same side length.
However, since the space filling of $\RR^{d-1}$ by a regular simplex is possible only when $d=2,3$\footnote{This is a folklore result, and we can show the impossiblity of the space filling of $\RR^d$ for $d\geq 3$ as follows.
The space filling by a polytope is possible only when the dihedral angle of the polytope is $2\pi$ divided by an integer $m$.
Since the dihedral andle of the $d$-dimensional regular simplex is $\arccos(1/d)$\cite{parks2002elementary} for $d\geq 2$, this leads to $\arccos(1/d)=2\pi/m$, i.e., $d = 1/\cos(2\pi/m)$. For $d\geq 3$, since $\arccos(1/d)>\arccos(1/2) = \pi/3$ hold, we obtain $4<m<6$, i.e., $m=5$.
However, since $1/\cos(2\pi/5) \approx 3.24$ is not an integer, the space filling by a $d$-dimensional regular simplex is impossible for $d\geq 3$.}, such a construction is impossible for $d\geq 4$.
One solution to circumvent this problem is to consider a space filling of $\Omega_d^{(n)}$ using two or more types of regular polytopes.
For instance, $\RR^3$ can be filled with the regular tetrahedron and the regular octahedron.
For such a filling, we cannot take a pairwise linear test function in the finite element method (see Appendix~\ref{appendix_sec:review_FEM}), and we need to take more general functions, e.g., higher-order polynomial functions.
However, it is not trivial whether such a generalization is possible to provide a similar analysis as shown in Eq.~\eqref{eq:lambda_FEM}, and we leave the proof of Conjecture~\ref{conj:h(d)} for a future work.

\section*{Acknowledgements}
We acknowledge fruitful discussions with Hosho Katsura and Jisho Miyazaki.
This work was supported by the MEXT
Quantum Leap Flagship Program (MEXT QLEAP) JPMXS0118069605,
JPMXS0120351339, Japan Society for the Promotion of Science (JSPS) KAKENHI Grants No. 23KJ0734 and No. 23K2164, FoPM, WINGS Program, the University of Tokyo, DAIKIN Fellowship Program, the University of Tokyo, IBM Quantum, the Special Postdoctoral Researchers Program at RIKEN, and JST ERATO Grant No. JPMJER2302, Japan

\section*{Author declarations}

The authors have no conflicts to disclose.

\section*{Data availability statement}

Data sharing is not applicable to this article as no new data were created or analyzed in this study.

\onecolumngrid
\appendix

\section{Review on the finite element method}
\label{appendix_sec:review_FEM}

In this section, we review the finite element method for the Dirichlet Laplacian problem used in the proof of Theorem~\ref{thm:asymptotically_optimal_unitary_estimation}.
For more comprehensive review, see standard textbooks, e.g., Refs.~\cite{brezis2011functional, larsson2003partial, urakawa2009geometry}.

Suppose $\Omega \subset \RR^d$ is a $d$-dimensional polytope and $\Xi = \{\xi_a\}_{a\in A}$ is a triangulation of $\Omega$, i.e., the set of simplices $\xi_i$ such that $\bigcup_{a\in A} \xi_a = \Omega$ and any two of simplices $\xi_a, \xi_b$ intersect in a common face or not at all.
We consider the following eigenvalue problem of the Dirichilet Laplacan:
\begin{align}
    \label{eq:dirichlet_laplacian_Omega}
    \begin{split}
        \Delta u &= -\lambda u,\\
        u\vert_{\partial \Omega} &= 0,\\
        u&\not\equiv 0.
    \end{split}
\end{align}
The minimum eigenvalue of the above eigenvalue problem, denoted by $\lambda_\mathrm{min}(\Omega)$, is given by the minimum principle as
\begin{align}
    \lambda_\mathrm{min}(\Omega) = \inf_{u\in H_0^1 (\Omega)} R_{\Omega}(u),
\end{align}
where $R_\Omega(u)$ is the Rayleigh quotient defined in Eq.~\eqref{eq:rayleigh} for $\Omega = \Omega_{d-1}$ and $H_0^1(\Omega)$ is the Sobolev space.
Therefore, for any $u\in H_0^1(\Omega)$ (called the test function), we obtain
\begin{align}
    \lambda_\mathrm{min}(\Omega) \leq R_\Omega(u).
\end{align}
The finite element method is to find a test function $u$ by discretizing $\Omega$ based on the triangulation $\Xi$.
We define the set of vertices $V_\Xi = \{v_i\}_{i\in I}$ by
\begin{align}
    V_\Xi\coloneqq \bigcup_{a\in A} V_{\xi_a},
\end{align}
where $V_{\xi_i}$ is the vertex set of the simplex $\xi_i$.
For $v_i\in V_\Xi$, we define a function $\psi_i: \Omega \to \RR$ such that
\begin{itemize}
    \item $\psi_i (v_j) = \delta_{i=j}$ for all $i,j \in I$
    \item $\psi_i$ is a pairwise linear function on each simplex $\xi_i\in \Xi$, i.e., there exist real constants $\alpha^a_{ij}$ and $\beta^{a}_i$ such that $\xi_i(x_1,\ldots,x_d) = \sum_{j=1}^{n} \alpha^a_{ij} x_j + \beta^{a}_{i}$ holds for all $\vec{x} = (x_1,\ldots,x_d)\in \xi_a$ and $a$.
\end{itemize}
We define the set of indices $I_0$ by
\begin{align}
    I_0\coloneqq \{i\in I\mid v_i\notin \partial \Omega\},
\end{align}
and define a real vector $\vec{u} = (u_i)_{i\in I_0}$.
Defining a pairwise linear function $\hat{\mathbf{u}}: \Omega \to \RR$ by
\begin{align}
    \hat{\mathbf{u}}(x_1,\ldots,x_d)\coloneqq \sum_{i\in I_0} u_i \psi_i(x_1,\ldots,x_d),
\end{align}
$\hat{\mathbf{u}}\in H_0^1(\Omega)$ holds for any $\vec{\mathbf{u}}$.
Therefore, we obtain
\begin{align}
    \lambda_\mathrm{min}(\Omega) \leq \inf_{\hat{\mathbf{u}}\neq 0} R_\Omega (\hat{\mathbf{u}}).
\end{align}
Defining the stiffness matrix $K(\Xi)$ and the mass matrix $M(\Xi)$ by
\begin{align}
\label{eq:def_stiffness}
    K_{ij}(\Xi) &\coloneqq \int_{\Omega} \nabla \psi_i(\vec{x}) \cdot \nabla \psi_j(\vec{x}) \dd \vec{x},\\
\label{eq:def_mass}
    M_{ij}(\Xi) &\coloneqq \int_{\Omega} \psi_i(\vec{x}) \psi_j(\vec{x}) \dd \vec{x}
\end{align}
for all $i,j\in I_0$, the Rayleigh quotient $R_\Omega(\hat{\mathbf{u}})$ is given by
\begin{align}
    R_\Omega(\hat{\mathbf{u}}) = {\vec{\mathbf{u}}^\mfT K(\Xi) \vec{\mathbf{u}} \over \vec{\mathbf{u}}^\mfT M(\Xi) \vec{\mathbf{u}}}.
\end{align}
By taking the minimum eigenvector $\vec{\textbf{u}}_\mathrm{min}$ of $M^{-1}(\Xi)K(\Xi)$ such that $M^{-1}(\Xi)K(\Xi)\vec{\textbf{u}}_\mathrm{min} = \lambda_\mathrm{min} \vec{\textbf{u}}_\mathrm{min}$ for $\lambda_\mathrm{min} = \min \eig M^{-1}(\Xi)K(\Xi)$, we obtain
\begin{align}
    R_\Omega(\hat{\mathbf{u}}_\mathrm{min}) &= \lambda_{\mathrm{min}} = \min \eig M^{-1}(\Xi)K(\Xi),
\end{align}
i.e.,
\begin{align}
    \lambda_\mathrm{min}(\Omega) \leq \min \eig M^{-1}(\Xi)K(\Xi)
\end{align}
holds.
In conclusion, we obtain the following theorem:
\begin{theorem}
\label{thm:FEM}
    The minimum eigenvalue $\lambda_\mathrm{min}(\Omega)$ of the Dirichlet Laplacian problem~\eqref{eq:dirichlet_laplacian_Omega} satisfies
    \begin{align}
        \lambda_\mathrm{min}(\Omega)\leq \min \eig M^{-1}(\Xi)K(\Xi),
    \end{align}
    where $K(\Xi)$ and $M(\Xi)$ are the stiffness matrix and the mass matrix defined in Eqs.~\eqref{eq:def_stiffness} and \eqref{eq:def_mass}.
\end{theorem}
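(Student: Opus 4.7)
The plan is to invoke the Rayleigh--Ritz variational characterization $\lambda_\mathrm{min}(\Omega) = \inf_{u \in H_0^1(\Omega)} R_\Omega(u)$ and to restrict the infimum to the finite-dimensional subspace of $H_0^1(\Omega)$ spanned by the hat functions $\{\psi_i\}_{i \in I_0}$ attached to the interior vertices of the triangulation $\Xi$. Once this restriction is justified, minimizing the Rayleigh quotient on this subspace reduces to a generalized matrix eigenvalue problem whose smallest eigenvalue is exactly $\min \eig M^{-1}(\Xi)K(\Xi)$, and the variational inequality $\lambda_\mathrm{min}(\Omega) \leq R_\Omega(u)$ for any admissible test function $u$ then yields the claim.

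First I would check that for every real vector $\vec{\mathbf{u}} = (u_i)_{i \in I_0}$ the piecewise-linear interpolant $\hat{\mathbf{u}} = \sum_{i \in I_0} u_i \psi_i$ actually belongs to $H_0^1(\Omega)$. Three properties must be verified: (i) continuity of $\hat{\mathbf{u}}$ across shared faces of adjacent simplices, which follows because a linear polynomial on a simplicial face is uniquely determined by its vertex values; (ii) $\nabla \hat{\mathbf{u}} \in L^2(\Omega)$, which is immediate since $\nabla \hat{\mathbf{u}}$ is constant on each of the finitely many simplices of $\Xi$; and (iii) the Dirichlet boundary condition $\hat{\mathbf{u}}|_{\partial \Omega} = 0$, which holds because $\partial \Omega$ is a union of simplicial faces whose vertices all lie in $I \setminus I_0$, on which $\hat{\mathbf{u}}$ linearly interpolates zeros.

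Next I would expand the Rayleigh quotient in coordinates. By bilinearity, $\int_\Omega \abs{\nabla \hat{\mathbf{u}}}^2 \dd \vec{x} = \vec{\mathbf{u}}^\mfT K(\Xi) \vec{\mathbf{u}}$ and $\int_\Omega \hat{\mathbf{u}}^2 \dd \vec{x} = \vec{\mathbf{u}}^\mfT M(\Xi) \vec{\mathbf{u}}$, so that $R_\Omega(\hat{\mathbf{u}}) = \vec{\mathbf{u}}^\mfT K(\Xi) \vec{\mathbf{u}} / \vec{\mathbf{u}}^\mfT M(\Xi) \vec{\mathbf{u}}$. The mass matrix $M(\Xi)$ is the Gram matrix of the linearly independent family $\{\psi_i\}_{i \in I_0}$ in $L^2(\Omega)$ and is therefore positive definite; writing $\vec{\mathbf{w}} = M(\Xi)^{1/2} \vec{\mathbf{u}}$ converts this ratio into an ordinary Rayleigh quotient for the symmetric matrix $M(\Xi)^{-1/2} K(\Xi) M(\Xi)^{-1/2}$, whose minimum over nonzero $\vec{\mathbf{w}}$ is its smallest eigenvalue and coincides with $\min \eig M(\Xi)^{-1} K(\Xi)$.

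I expect the main obstacle to be step one, the conformity of the finite-dimensional ansatz: one must verify $\hat{\mathbf{u}} \in H_0^1(\Omega)$ and not merely $L^2(\Omega)$, which rests on the matching of linear pieces across shared simplicial faces. This matching succeeds here because every face of a simplex is itself a simplex on which a linear function is fixed by vertex data, but it would fail for higher-order polynomial trial functions or for non-simplicial cell decompositions without additional constraints, which is consistent with the difficulties the conclusion flags when attempting to extend the proof of Theorem~\ref{thm:asymptotically_optimal_unitary_estimation} to $d \ge 4$.
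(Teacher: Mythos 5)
Your proposal is correct and follows essentially the same route as the paper: the minimum principle $\lambda_\mathrm{min}(\Omega) = \inf_{u \in H_0^1(\Omega)} R_\Omega(u)$, restriction to the span of the piecewise-linear hat functions $\{\psi_i\}_{i\in I_0}$, and identification of the restricted Rayleigh quotient with $\vec{\mathbf{u}}^\mfT K(\Xi)\vec{\mathbf{u}} / \vec{\mathbf{u}}^\mfT M(\Xi)\vec{\mathbf{u}}$, minimized by the smallest eigenvalue of $M^{-1}(\Xi)K(\Xi)$. The only differences are that you spell out the conformity check $\hat{\mathbf{u}} \in H_0^1(\Omega)$ and the positive-definiteness of $M(\Xi)$, which the paper asserts without detail.
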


\section{Derivation of analytical formula for Kahn's lower bound on the unitary estimation fidelity}
\label{appendix_sec:kahn}
\subsection{Review of Kahn's derivation}
\label{appendix_subsec:kahn_derivation}
As shown in Sec.~\ref{subsubsec:proof_upper_bound}, for a smooth function $u:\Omega_{d-1} \to \RR$ such that $u\vert_{\partial\Omega_{d-1}}=0$, we obtain
\begin{align}
    h_{n,d}\leq {R_{\Omega_{d-1}}(u)\over d}+O(n^{-1}),
\end{align}
where $R_{\Omega_{d-1}}(u)$ is the Rayleigh quotient defined in Eq.~\eqref{eq:rayleigh}.
We define $u$ by
\begin{align}
    u(x_1,\ldots,x_{d})\coloneqq x_d \prod_{i=1}^{d-1} (x_i - x_{i+1}),
\end{align}
which satisfies $u\vert_{\partial \Omega_{d-1}} = 0$.
To evaluate the integral in Eq.~\eqref{eq:rayleigh}, we introduce a new coordinate $(y_1,\ldots,y_d)$ defined by
\begin{align}
    y_i\coloneqq
    \begin{cases}
        x_i-x_{i+1} & (i\in[d-1])\\
        x_d & (i=d)
    \end{cases},
\end{align}
which satisfies
\begin{align}
    (x_1,\ldots,x_d) \in \Omega_{d-1} \Leftrightarrow (y_1,\ldots,y_d)\in S_{d-1}.
\end{align}
Then, the function $u$ and $\nabla u$ is expressed in the new coordinate as
\begin{align}
    u(x_1,\ldots,x_d) &= v(y_1,\ldots,y_d)\coloneqq \prod_{i=1}^{d} y_i,\\
    {\partial u(x_1,\ldots,x_d) \over \partial x_i} &=
    \begin{cases}
        {\partial v(y_1,\ldots,y_d) \over \partial y_1}& (i=1)\\
        {\partial v(y_1,\ldots,y_d) \over \partial y_i} - {\partial v(y_1,\ldots,y_d) \over \partial y_{i-1}} & (i\geq 2)\\
    \end{cases}\\
    &= \begin{cases}
        y_{\{1\}}& (i=1)\\
        y_{\{i\}}-y_{\{i-1\}} & (i\geq 2)\\
    \end{cases}.
\end{align}
Thus, $R_{\Omega_{d-1}}(u)$ is evaluated as
\begin{align}
    R_{\Omega_{d-1}}(u)
    &= {\int_{\Omega_{d-1}} \sum_{i=1}^{d} \left[{\partial u (x_1,\ldots,x_d) \over \partial x_i}\right]^2 - {1\over d}\left[\sum_{i=1}^{d}{\partial u (x_1,\ldots,x_d) \over \partial x_i}\right]^2 \dd \vec{x} \over \int_{\Omega_{d-1}} [u(x_1,\ldots,x_d]^2 \dd \vec{x}}\\
    &= {\int_{S_{d-1}} \left[(y_{\{1\}})^2+\sum_{i=2}^{d}(y_{\{i\}} - y_{\{i-1\}})^2 - {1\over d} (y_{\{d\}})^2\right] \dd \vec{y} \over \int_{S_{d-1}} \prod_{i=1}^{d} y_i^2 \dd \vec{y}}\\
    &= {\int_{S_{d-1}} \left[2\sum_{i=1}^{d} (y_{\{i\}})^2-2\sum_{i=2}^{d}y_{\{i-1\}} y_{\{i\}}-\left(1+{1\over d}\right) (y_{\{d\}})^2\right] \dd \vec{y} \over \int_{S_{d-1}} \prod_{i=1}^{d} y_i^2 \dd \vec{y}}\\
    &= {2d(A_d-B_d) - (d+1) C_d\over d D_d},
\end{align}
i.e., we obtain
\begin{align}
    h_{n,d} \leq {2d(A_d-B_d) - (d+1) C_d\over d^2 D_d}+O(n^{-1}).
\end{align}

\subsection{Analytical solution for the integrals in Kahn's upper lower bound}
\label{appendix_subsec:kahn_solution}
This section shows the analytical formulae for $A_d, B_d, C_d, D_d$ shown in Eq.~\eqref{eq:kahn_integral}:
\begin{lemma}
\label{lem:kahn_integral}
    \begin{align}
        A_d&= {d(d+1)(2d+1)(3d-1)(3d-2) \over 12}D_d,\\
        B_d&= {d(d-1)(d+1)(3d-1)(3d-2)\over 12}D_d,\\
        C_d&= {d^2(3d-2)(3d-1) \over 2}D_d,\\
        D_d&= {2^d \sqrt{5}\over (d!)^3 (3d-1)!}.
    \end{align}
\end{lemma}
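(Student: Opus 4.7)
The plan is to evaluate all four integrals by a single change of variables that reduces $S_{d-1}$ to the standard simplex, after which the classical Dirichlet integral formula
\begin{align*}
    \int_{\substack{\vec y\ge 0 \\ y_1+\cdots+y_d=1}}\prod_{i=1}^d y_i^{a_i-1}\,d\vec y \;=\; \frac{\prod_i \Gamma(a_i)}{\Gamma(\sum_i a_i)}
\end{align*}
does essentially all the work. Setting $y_i=ix_i$ turns the constraint $\sum_i ix_i=1$ into $\sum_i y_i=1$ and contributes a Jacobian $1/d!$ from $dx_2\cdots dx_d = (1/d!)\,dy_2\cdots dy_d$, together with a constant surface-measure prefactor on the hyperplane $\{\sum_i ix_i=1\}$. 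Because this prefactor multiplies all four integrals identically, it cancels in the ratios $A_d/D_d$, $B_d/D_d$, $C_d/D_d$ that are the only quantities entering Kahn's bound in Corollary~\ref{cor:kahn}; I identify it with the $\sqrt{5}$-type constant appearing in the stated $D_d$.

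Under this substitution, $x_{\{i\}}=\prod_{j\ne i}x_j = (i/d!)\prod_{j\ne i}y_j$, so each integrand reduces to an $i$-dependent scalar times a monomial in $\vec y$. For $D_d$ the monomial has all exponents equal to $2$, giving Dirichlet value $\Gamma(3)^d/\Gamma(3d) = 2^d/(3d-1)!$. The $i$th summand of $A_d$ is $i^2(d!)^{-2}\prod_{j\ne i}y_j^2$, with exponents $a_i=1$ and $a_j=3$ for $j\ne i$, giving $2^{d-1}/(3d-3)!$. The $i$th summand of $B_d$ is $i(i-1)(d!)^{-2}\, y_{i-1}y_i\prod_{j\notin\{i-1,i\}}y_j^2$ with $a_{i-1}=a_i=2$ and $a_j=3$ otherwise, giving $2^{d-2}/(3d-3)!$. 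For $C_d$ only the single monomial $d^2(d!)^{-2}\prod_{j<d}y_j^2$ appears, producing the same Dirichlet factor $2^{d-1}/(3d-3)!$ as one summand of $A_d$.

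What remains is elementary: sum the $i$-dependent weights via $\sum_{i=1}^d i^2 = d(d+1)(2d+1)/6$ for $A_d$ and $\sum_{i=2}^d i(i-1) = d(d-1)(d+1)/3$ for $B_d$, then convert $1/(3d-3)!$ into $1/(3d-1)!$ using the uniformly appearing factor $(3d-1)(3d-2)$. No step poses a conceptual obstacle; the only point requiring care is the consistent bookkeeping of the common $(d!)^{-3}$ and surface-measure constants across all four integrals, so that on dividing, the clean polynomial $\tfrac{1}{6}(d+1)(d-1)(3d-2)(3d-1)$ of Corollary~\ref{cor:kahn} emerges from $[2d(A_d-B_d)-(d+1)C_d]/(d^2 D_d)$.
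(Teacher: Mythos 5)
Your argument is correct where it matters, but it takes a genuinely different route from the paper. The paper exploits the recursive slicing $S_{d-1}=\{((1-dx_d)\vec{x}',x_d)\mid \vec{x}'\in S_{d-2}\}$ to derive first-order recursions in $d$ for $A_d,B_d,C_d,D_d$ (each reducing to a one- or two-variable Beta integral), and then solves those recursions from base cases at $d=2,3$. You instead linearize the whole problem at once via $y_i=ix_i$ and invoke the Dirichlet integral on the standard simplex, reading off each integral as a sum of monomial moments. I checked your exponent bookkeeping: the Dirichlet factors $2^{d-1}/(3d-3)!$ for the $A_d$ and $C_d$ summands, $2^{d-2}/(3d-3)!$ for the $B_d$ summands, and $2^d/(3d-1)!$ for $D_d$ are right, the weights $i^2/(d!)^2$, $i(i-1)/(d!)^2$, $d^2/(d!)^2$ are right, and the sums $\sum i^2$ and $\sum i(i-1)=d(d-1)(d+1)/3$ reproduce exactly the three stated ratios $A_d/D_d$, $B_d/D_d$, $C_d/D_d$. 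Your method is arguably cleaner: it avoids solving recursions and makes the uniform factor $(3d-1)(3d-2)$ transparent. What the paper's recursion buys is a self-contained normalization of the measure on $S_{d-1}$, which is where your one soft spot lies.

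Concretely: the fourth identity, $D_d=2^d\sqrt{5}/((d!)^3(3d-1)!)$, is the only one that is sensitive to the overall normalization of $\dd\vec{x}$ on the hyperplane $\{\sum_i ix_i=1\}$, and you do not actually derive its constant — you ``identify'' the surface-measure prefactor with $\sqrt{5}$ without computation. If you carry your substitution through with the genuine induced surface measure, the prefactor is $\sqrt{1+\sum_{i<d}(i/d)^2}=\sqrt{\sum_{i=1}^d i^2}/d$, which gives $D_d=2^d\sqrt{d(d+1)(2d+1)/6}\,/\,((d!)^3(3d-1)!)$; this agrees with $\sqrt 5$ at $d=2$ but gives $\sqrt{14}$ rather than $\sqrt 5$ at $d=3$. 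The $\sqrt{5}$ in the paper arises from its recursive convention $\dd\vec{x}=\dd x_d\times(\text{scaled measure on }S_{d-2})$, which differs from the surface measure by a $d$-dependent constant. Since every use of these integrals (the Rayleigh quotient and the combination $[2d(A_d-B_d)-(d+1)C_d]/(d^2D_d)$ in Corollary~\ref{cor:kahn}) involves only ratios, this discrepancy is harmless, and your derivation of the bound $\tfrac16(d+1)(d-1)(3d-2)(3d-1)$ is sound; but as a proof of the lemma as literally stated you should either adopt the paper's measure convention explicitly or state $D_d$ with the correct surface-measure constant.
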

\begin{proof}
We focus on the recursive structure on $S_{d-1}$ given by
\begin{align}
    S_{d-1}
    &= \{((1-dx_d)\vec{x}',x_{d}) \mid \vec{x}'\in S_{d-2}, 0\leq x_d\leq 1/d\}\\
    &= \{((1-dx_d-(d-1)x_{d-1})\vec{x}'',x_{d-1},x_{d}) \mid \vec{x}''\in S_{d-3}, 0\leq x_d\leq 1/d, 0\leq x_{d-1}\leq (1-dx_d)/(d-1)\}.
\end{align}
Then, $A_d$, $B_d$, $C_d$ and $D_d$ satisfy the following recursive relation:
\begin{align}
    A_d
    &= \int_{0}^{1/d} \dd x_d x_d^2 (1-dx_d)^{3(d-2)} A_{d-1} + C_d\\
    &= {2\over d^3(3d-5)(3d-4)(3d-3)} A_{d-1}+C_d \quad \text{for } d\geq 3,\\
    B_d
    &= \int_0^{1/d} \dd x_d x_d^2 (1-dx_d)^{3(d-2)} B_{d-1}\nonumber\\
    &\quad + \int_{0}^{1/d} \dd x_d x_d \int_{0}^{(1-dx_d)/(d-1)} \dd x_{d-1} x_{d-1} [1-dx_d-(d-1)x_{d-1}]^{2(d-2)+d-3} D_{d-2}\\
    &= {2\over d^3(3d-5)(3d-4)(3d-3)} B_{d-1} + {1\over d^2(d-1)^2(3d-6)(3d-5)(3d-4)(3d-3)} D_{d-2} \quad \text{for } d\geq 4,\\
    C_d
    &= \int_{0}^{1/d} \dd x_d (1-dx_d)^{2(d-1)+d-2} D_{d-1}\\
    &= {1\over d(3d-3)} D_{d-1} \quad \text{for } d\geq 3,\\
    D_d
    &= \int_0^{1/d} \dd x_d x_d^2 (1-dx_d)^{2(d-1)+d-2} D_{d-1}\\
    &= {2\over d^3 (3d-3)(3d-2)(3d-1)} D_{d-1} \quad \text{for } d\geq 3.
\end{align}
Since
\begin{align}
    A_2 = {5\sqrt{5}\over 24},
    \quad B_2 = {\sqrt{5}\over 24},
    \quad B_3 = {\sqrt{5}\over 9720},
    \quad C_2 = {\sqrt{5} \over 6},
    \quad
    D_2 = {\sqrt{5}\over 240}
\end{align}
hold, we obtain
\begin{align}
    A_d&= {d(d+1)(2d+1)(3d-1)(3d-2) \over 12}D_d,\\
    B_d&= {d(d-1)(d+1)(3d-1)(3d-2)\over 12}D_d,\\
    C_d&= {d^2(3d-2)(3d-1) \over 2}D_d,\\
    D_d&= {2^d \sqrt{5}\over (d!)^3 (3d-1)!}.
\end{align}
\end{proof}

\section*{References}

\twocolumngrid

\bibliography{main}

\end{document}